\newcommand{\para}[1]{{\vspace{4pt} \bf \noindent #1 \hspace{10pt}}}
\begin{document}

\title{Revisiting Degree Distribution Models for \\Social Graph Analysis}
\author{Alessandra Sala$^\S$, Sabrina Gaito$^\dag$, Gian Paolo Rossi$^\ddag$, Haitao Zheng$^\S$ and Ben Y. Zhao$^\S$ \\
{$^\S$UC Santa Barbara, $^\dag$Universit\`{a} degli Studi di Milano}\\
{\rm \em \{alessandra, htzheng, ravenben\}@cs.ucsb.edu, gaito@dsi.unimi.it,
  rossi@dico.unimi.it}}

\newtheorem{theorem}{Theorem}
\newtheorem{definition}{Definition}
\newtheorem{lemma}{Lemma}
\newtheorem{cor}{Corollary}
\newtheorem{fact}{Fact}
\newtheorem{property}{Property}
\newtheorem{remark}{Remark}
\newtheorem{claim}{Claim}

\maketitle

\begin{abstract}
  Degree distribution models are incredibly important tools for analyzing and
  understanding the structure and formation of social networks, and can help
  guide the design of efficient graph algorithms.  In particular, the
  Power-law degree distribution has long been used to model the structure of
  online social networks, and is the basis for algorithms and heuristics in
  graph applications such as influence maximization and social search.  

  Along with recent measurement results, our interest in this topic was sparked by our
  own experimental results on social graphs that deviated significantly from
  those predicted by a Power-law model.  In this work, we seek a deeper understanding of
  these deviations, and propose an alternative model with significant
  implications on graph algorithms and applications.  We start by quantifying
  this artifact using a variety of real social graphs, and show that their
  structures cannot be accurately modeled using elementary distributions
  including the Power-law.  Instead, we propose the Pareto-Lognormal (PLN)
  model, verify its goodness-of-fit using graphical and statistical methods,
  and present an analytical study of its asymptotical differences with the
  Power-law.  To demonstrate the quantitative benefits of the PLN model, we
  compare the results of three wide-ranging graph applications on
  real social graphs against those on synthetic graphs generated using the
  PLN and Power-law models.  We show that synthetic graphs generated using
  PLN are much better predictors of degree distributions in real graphs, and
  produce experimental results with errors that are orders-of-magnitude smaller than
  those produced by the Power-law model.
\end{abstract}

\section{Introduction}
\label{sec:intro}

Graph degree distributions are fundamental tools used in the study of complex
networks such as online social networks.  Not only do they reveal insights
into the structure and formation of these networks, but they also lay the
foundation for modeling network dynamics and help guide the design of
graph algorithms and applications.  In particular, it is widely believed that the
Power-law distribution accurately captures node degrees in complex networks
such as online social networks, {\em i.e.} their degree distribution follows
a Power-law function $f(x)=c x^{-\alpha}$, for some normalization constant
$c$ and exponent $\alpha$. As a result, the Power-law model has already
played a significant role in guiding the design of algorithms on social
network problems such as influence maximization~\cite{influence_kdd09},
landmark selection~\cite{social_landmark}, and link privacy
protection~\cite{link_privacy}.

\para{Deviating from Power-law.} Our interest in verifying the Power-law
model for social graphs was first sparked by our own efforts to partition social
graphs. In a recent project, we searched for an efficient way to partition and
divide large social graphs across distributed machines for parallel graph
computation.  The goal is to minimize edges between partitions, thereby
reducing data dependencies between machines when resolving graph queries.

Given the known density of social graphs, it was not surprising when popular
partitioning algorithms such as Metis produced poor partitions (those with
low modularity) from our Facebook graphs~\cite{interaction}.  Our next step
was to leverage the known fact that Power-law graphs are vulnerable to
targeted attacks, {\em i.e.} they quickly fragment when their ``supernodes''
are removed~\cite{powerlawfailure}.  We evaluated a new partitioning approach
where a small portion of nodes with the highest degree are replicated across
all machines. This allows us to essentially ``remove'' the supernodes from
the graph, which should fragment the graph into numerous disconnected or
weakly connected subgraphs, which are easily partitioned.

Surprisingly, our results showed that unlike prior results on peer-to-peer
networks~\cite{gnutella02}, social graphs were extremely resilient to this
approach.  On one of our typical Facebook graphs with 500K nodes, cutting the
top 10\% (50K) supernodes had no impact on the connected graph. Nearly all of
the remaining nodes ($\sim$440K nodes) were still connected in a strongly
connected component.

\para{Revisiting degree models.}  These surprising results led us to
re-examine how well social graphs obey the Power-law degree model.  Recent
measurements of popular online social networks have shed light on their
internal structure, and also produced a number of real social graphs suitable
for model validation~\cite{interaction,
  socialnets-measurement,Mislove09_wosn,flickr-growth}.  Despite using the
Power-law parameter as a popular graph metric, it is increasingly evident
that the Power-law only captured a portion of the degree distribution
curve.  While never explained, several results consistently show that the
Power-law distribution over-predicts the number of ``high-degree nodes'' in
social networks~\cite{Ahn07, sampling_facebook}.  Similar divergence has
also been observed in Internet topology and web graphs or mobile cell
graphs~\cite{winnersdon't, mobile_dpln}.

In this paper, we revisit the problem of modeling node degree distributions
in online social networks.  We question existing assumptions, propose the use
of an alternative distribution model, and show that choosing an appropriate degree
distribution model 
has dramatic impact on a wide-range of OSN and graph applications.  We are
primarily interested in answer three key questions.  First, how significant
is the fitting error when modeling degree distributions using the Power-law
and other elementary distributions?  Second, can we propose an alternative
distribution that provides a fit with significantly better accuracy?
% , and give analytical tools in the form of clear definitions of its
% probability, cumulative distribution, and maximum likelihood functions?
Finally, what is the real impact of switching to this alternative
distribution from Power-law, and how does it impact real graph applications?

\para{Results and Contributions.}  In addressing the questions outlined
above, our work makes several contributions to the problem of developing a
more accurate degree distribution model for online social networks.

{\em First}, we assert that the seminal Power-law distribution does
not accurately model degree distributions in OSNs.  We verify this hypothesis
using a number of measured social graphs from the Facebook and Orkut OSNs,
and show that other elementary distribution also perform poorly.  {\em Second}, we
search for a more accurate model by examining complex distributions, and
propose the use of the Pareto-Lognormal (PLN) distribution.  Intuitively, the
Power-law distribution forms when each new node joins the network via a
bootstrap node $b_i$, and builds edges using the ``rich-get-richer''
model in some graph neighborhood around $b_i$.  In contrast, the PLN
distribution forms when each new node joins multiple communities and uses a
stochastic process to drive connections within each community.  To support
PLN's use in analysis, we analytically describe its probability distribution
function, cumulative distribution function, and maximum likelihood function.
Using three different error measures, we compare the accuracy of
PLN, Power-law and 5 other models on 7 different OSN social graphs ranging
in size from 740K edges and 14K nodes, to 118 million edges and 1.6 million
nodes.  Our results confirm that PLN provides the most accurate edge
distribution model.

{\em Third}, we highlight the differences between the PLN and
Power-law models, quantifying the asymptotic differences between them when predicting 
high degree nodes in the network.  For both models, we derive a 
close form to bound the lowest degree of a node for any percentile of
high-degree nodes in the network. We then make predictions of the cardinality
of nodes in that percentile.  Using our social graph data, we validate the
predictions from both models, and find that PLN generally produces errors
that are at least two-orders-of-magnitude smaller than those of Power-law.

{\em Fourth}, we examine the end-to-end impact of degree distribution models
on multiple graph applications, including graph partitioning, influence
maximization~\cite{influence_kdd09}, and attacks on social graph link
privacy~\cite{link_privacy}.  Prior studies of these applications used assumptions of
Power-law graphs to drive their algorithm design.  We implement each
application, and show that running them on synthetic Power-law
graphs produced {\em dramatically} different results from those on real social graphs .  In
contrast, we show that running applications on synthetic PLN graphs produces
results nearly identical to those produced using real social graphs.  {\em
  Finally}, we give some preliminary intuition towards the ongoing design of
a generative model that both captures the temporal evolution of social
networks and produces degree distributions matching our PLN model.  We draw
insights from a series of daily snapshots of a Facebook social graph that
capture dynamic growth over a period of a month.

\para{Social Graph Datasets.} To evaluate our models, we use 7
real social graphs gathered from recent measurements of popular online social
networks.  The majority of our datasets come from Facebook, the most popular
OSN today with more than 500 million users.  We use traces gathered through
crawls of the Facebook network in $2009$, when Facebook was structurally
organized into geographical/regional networks.  We had crawled and
analyzed over $10$ million users ($\sim$15\% of Facebook in 2008), as part
of an earlier measurement study~\cite{interaction}.  For
this paper, we utilize $6$ anonymized social graphs representing a range of
network sizes, from a small Monterey, CA graph ($13K$ nodes, $704K$ edges) to
a large London graph ($1.6$ million nodes, $118$ million edges).  We
also include in our study the ``Manhattan Random Walk'' Facebook graph
from~\cite{sampling_facebook}, which has been proven to be a representative
uniform random sampling of the total Facebook graph.  We use 
this graph to validate the representativeness of our Facebook results.  Finally, we
also include a public social graph from the Orkut
OSN~\cite{socialnets-measurement}.  With 3 million nodes, 111 million edges,
it has more nodes but less edges than our largest Facebook graph (London).
Our datasets and their key statistics are summarized in Table~\ref{table:stats}.

\para{Roadmap.}  We begin in Section~\ref{sec:elementary} by examining how
well elementary functions fit degree distributions from real social
graphs. Next, in Section~\ref{sec:newdistribution}, we describe the PLN model
through its PDF, CDF and maximum likelihood functions, and show the accuracy of this model
through statistical analysis.  Then, we quantify 
asymptotic differences between the models in Section~\ref{sec:implication},
present experimental application-level results in
Section~\ref{sec:application}, and discuss intuition for a generative model
in Section~\ref{sec:generative}.  Finally, we discuss related work in
Section~\ref{sec:related} and conclude in Section~\ref{sec:conclusion}.
\section{Elementary Distributions }
\label{sec:elementary}
We start by examining how well elementary distribution models
fit real social graphs from deployed OSNs. We include in our analysis the three
elementary distributions:  Power-law, Lognormal and Exponential
distributions~\cite{mitzenmacher03, powerlawfit}.
We leave out the formal introduction of these models, and instead focus on presenting
the results of our experimental analysis.

\begin{figure*}[t]
\begin{minipage}[h]{0.66\columnwidth}
\epsfig{figure=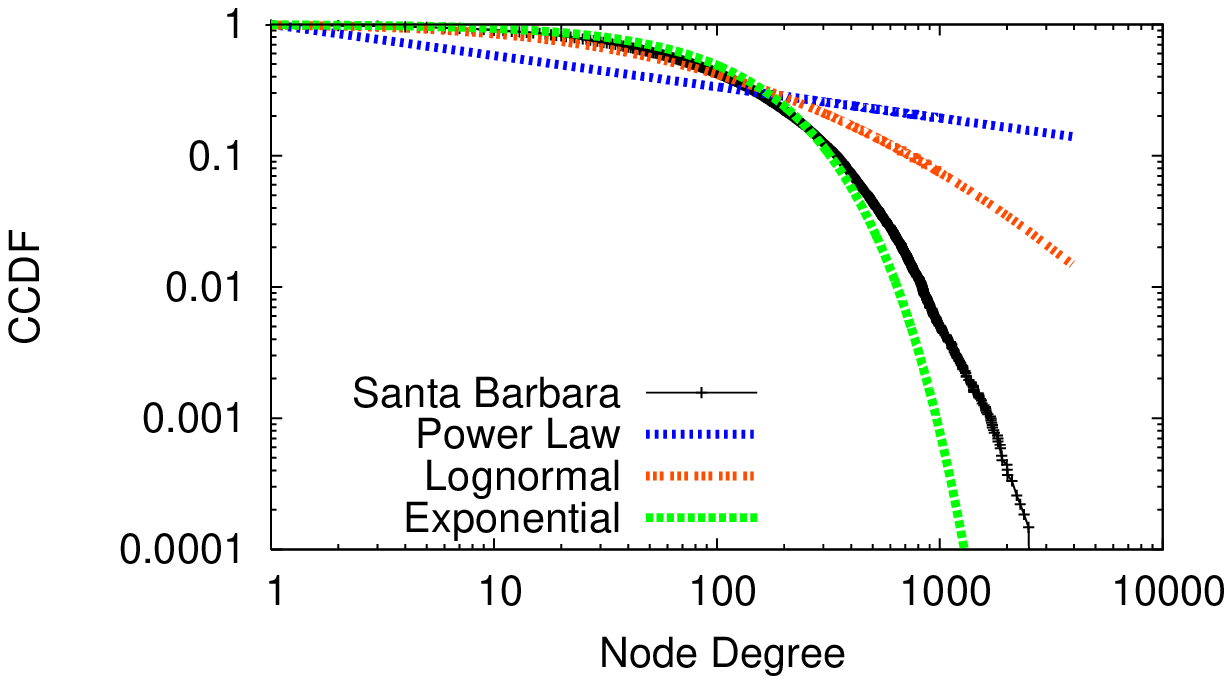, height=1.25in}
%\epsfig{figure=figs/new_sbarbara_Elementary_ccdf.eps, height=1.25in}
%\vspace{-0.15in}
%\label{fig:santa_barbara}
\end{minipage}
\hfill
\begin{minipage}[h]{0.66\columnwidth}
\epsfig{figure=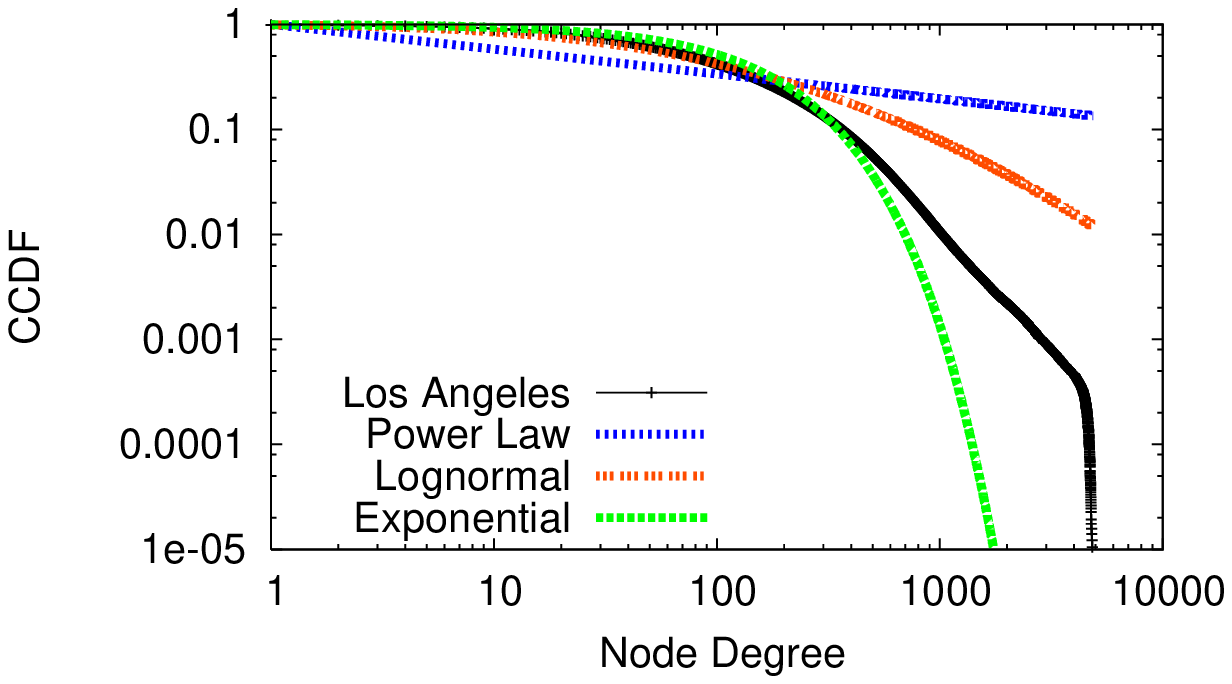,height=1.25in}
%\vspace{-0.15in}
%\label{fig:la}
\end{minipage}
\hfill
\begin{minipage}[h]{0.66\columnwidth}
\epsfig{figure=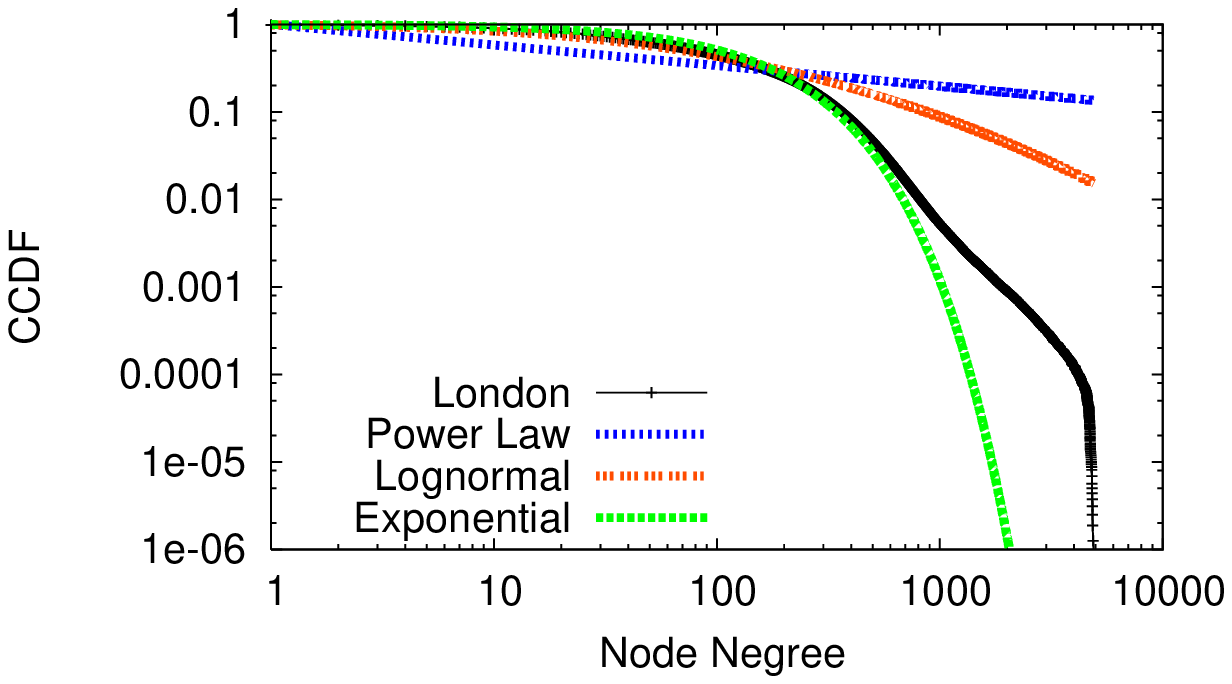,height=1.25in}
%\vspace{-0.15in}
%\label{fig:london}
\end{minipage}
%\vspace{-0.1in}
\caption{\small Complementary cumulative distribution function of elementary models
  fitted on Santa Barbara, Los Angeles and London Facebook graphs.} 
\label{fig:ComplementaryCDF}
%\end{figure*}
%\begin{figure*}[]
%\vspace{+0.1in}
\begin{minipage}[h]{0.66\columnwidth}
\epsfig{figure=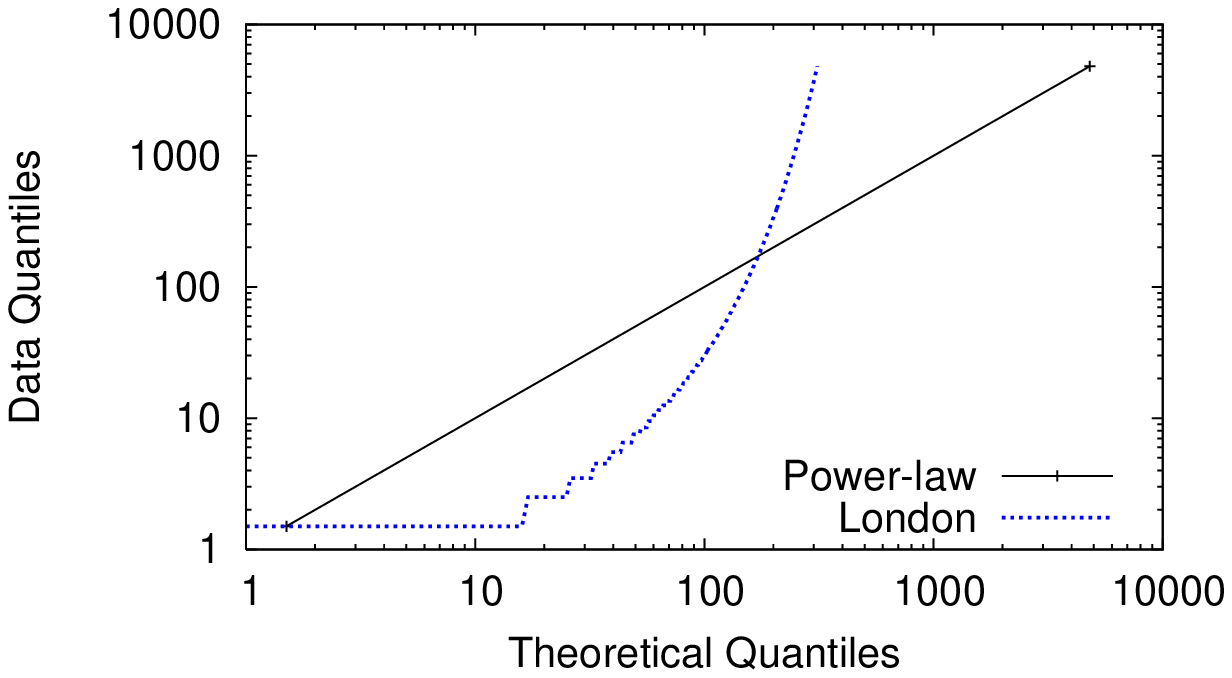, height=1.25in}
%\label{fig:quantile1}
\end{minipage}
\hfill
\begin{minipage}[h]{0.66\columnwidth}
\epsfig{figure=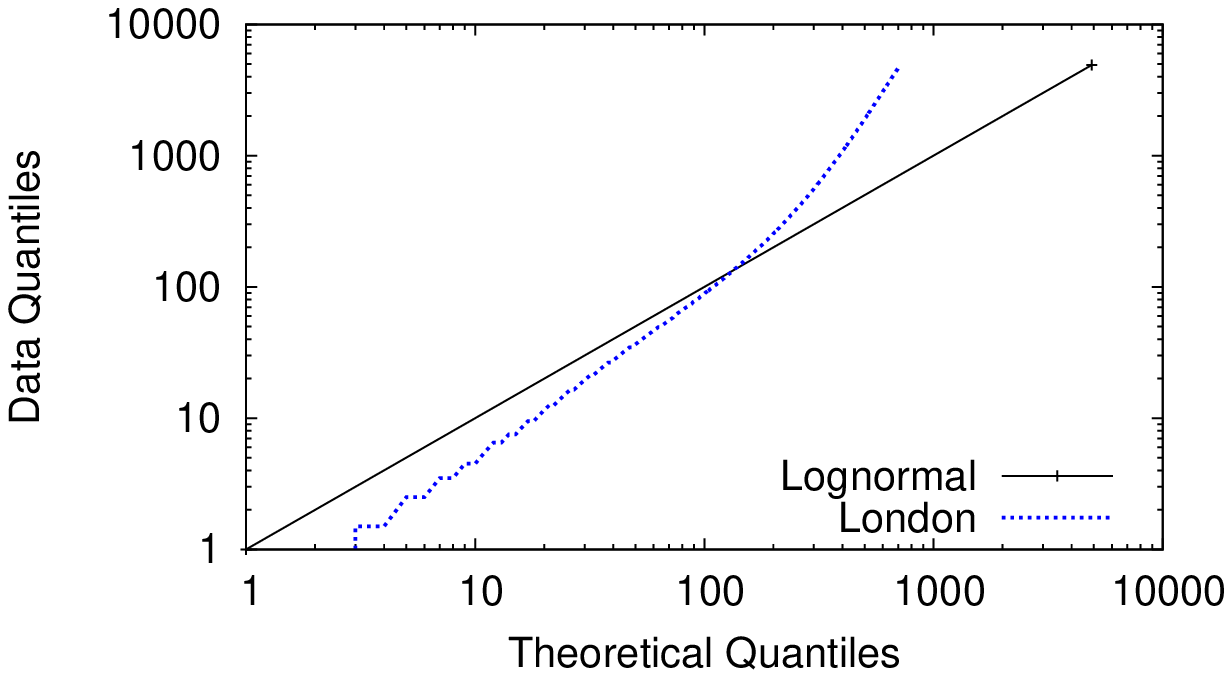,height=1.25in}
%\label{fig:quantile2}
\end{minipage}
\hfill
\begin{minipage}[h]{0.66\columnwidth}
\epsfig{figure=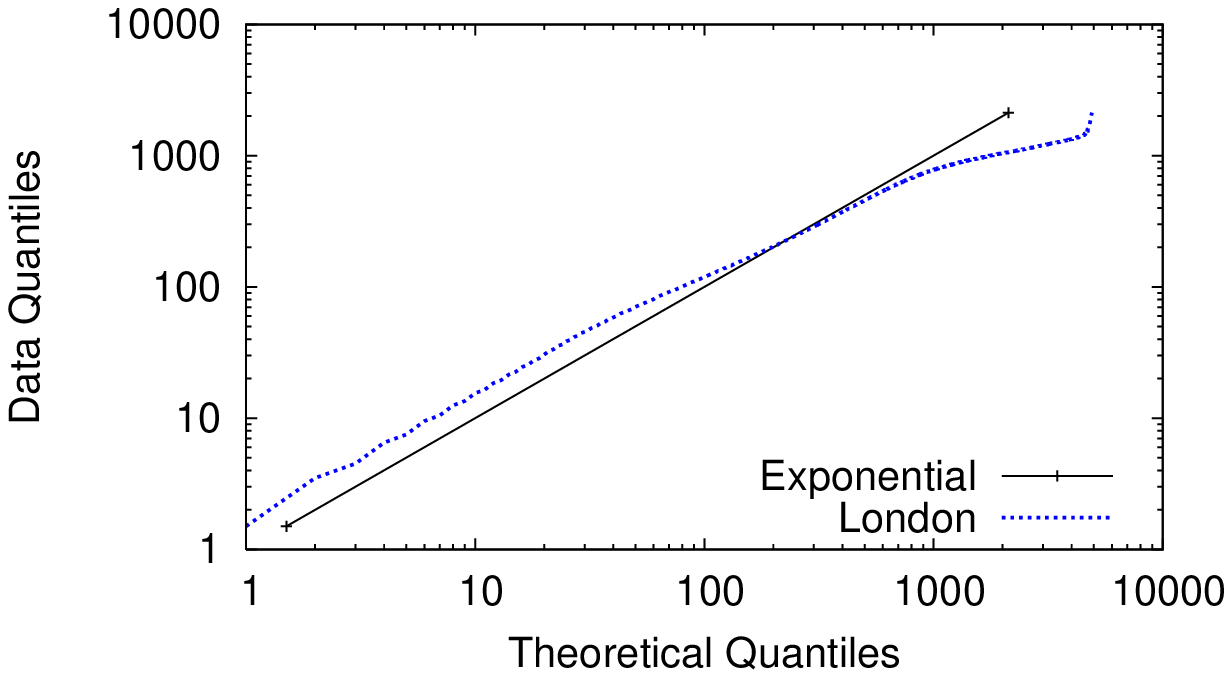,height=1.25in}
%\label{fig:quantile3}
\end{minipage}
\caption{\small Quantile-Quantile plot of elementary distributions fitted on the London dataset to graphically show the similarity of real social network data to a particular distribution model.}
\label{fig:quantile}
\end{figure*}

\begin{figure*}[t]
\begin{minipage}[h]{0.66\columnwidth}
\centering
\epsfig{figure=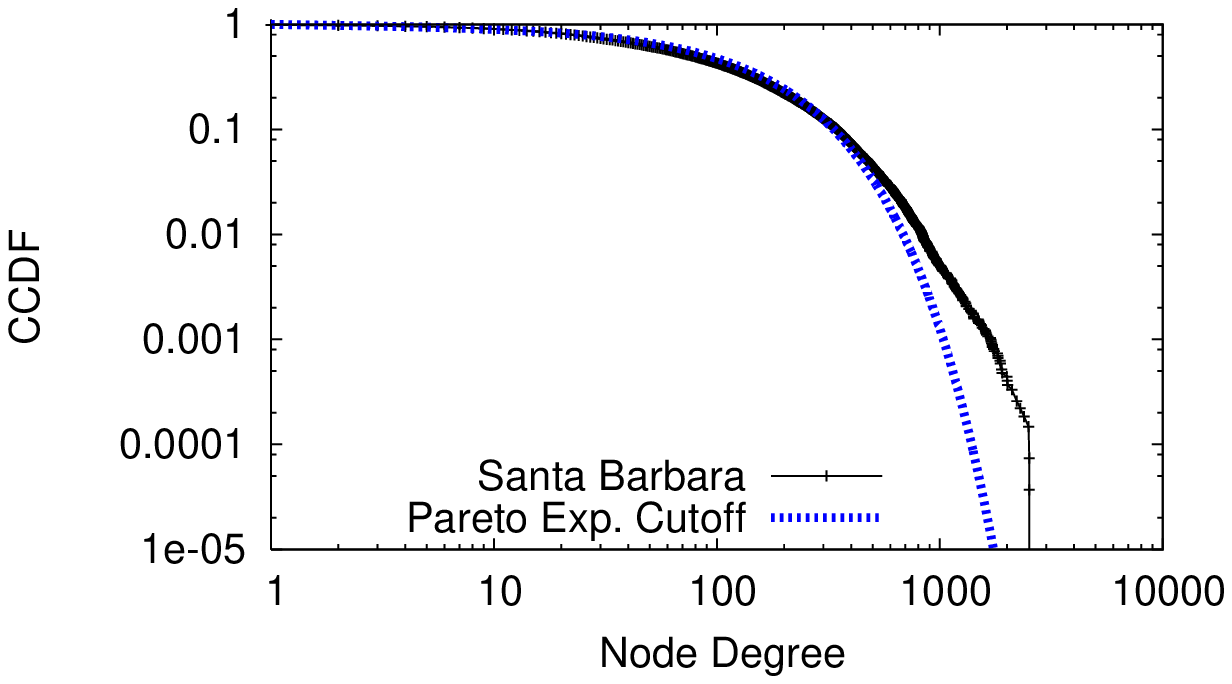,height=1.25in}
%\vspace{-0.15in}
\end{minipage}
\hfill
\begin{minipage}[h]{0.66\columnwidth}
\centering
\epsfig{figure=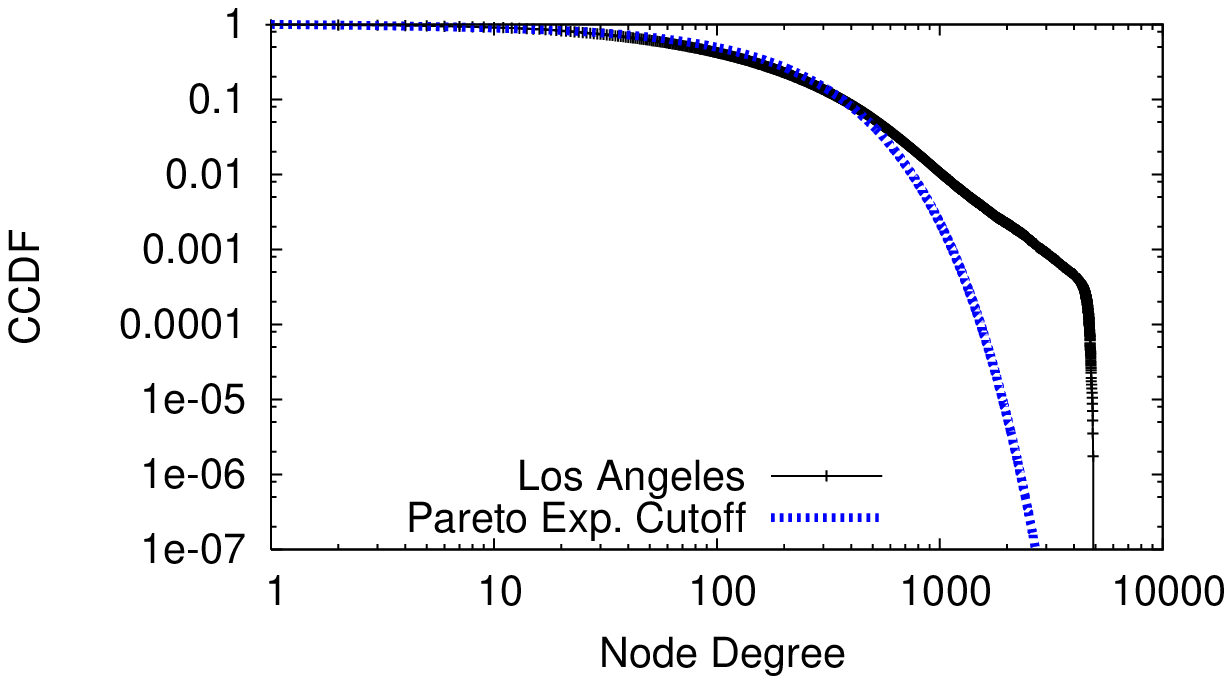,height=1.25in}
%\vspace{-0.15in}
\end{minipage}
\hfill
\begin{minipage}[h]{0.66\columnwidth}
\centering
\epsfig{figure=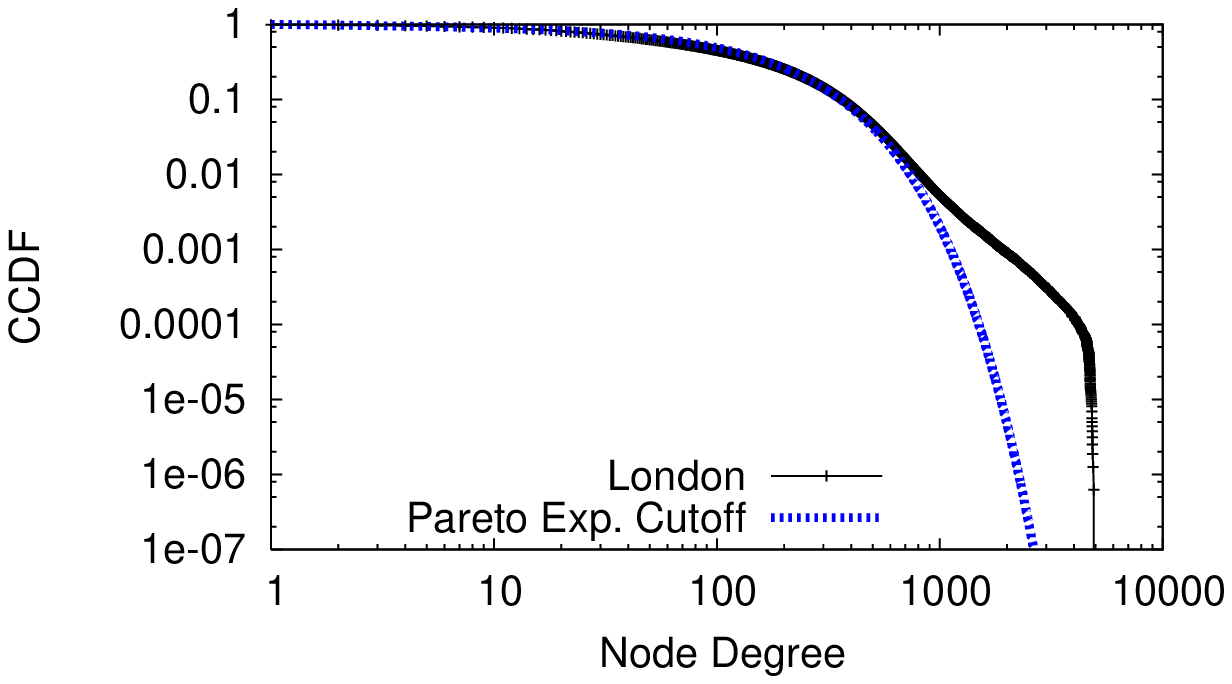,height=1.25in}
%\vspace{-0.15in}
\end{minipage}
\caption{\small The CCDF of the Pareto with Exponential Cutoff fitted on Santa
  Barbara, Los Angeles and London networks.} 
 \label{fig:ccdf_cutoff}
\end{figure*}

\para{Fitting Models to Real Datasets.}  
We fit each distribution model to our real OSN
datasets, using an optimal estimator to derive the best model
parameters~\cite{powerlawfit}.

For each considered distribution model, we analyze the quality of the fit on
our real data through three probability plots: the probability distribution, the
complementary cumulative distribution~(CCDF), and the ``quantile-quantile''
plot.

The {\em probability distribution function} is an important metric for
understanding the portion of nodes that have a certain degree value.  For
example, the seminal Power-law model predicts that a large number of nodes
have a high degree.  The CCDF quantifies how often a node's degree is above a
value. It is particularly useful for identifying the general slope of OSN
connectivity.  Finally, the {\em Quantile-Quantile} analysis can graphically
compare different distributions, in our case a theoretical model and a real
social graph.  The plot shows the discrepancy between degree values that
correspond to the same quantile in both distributions.  The greater the
distance from the reference line, the stronger the evidence that the dataset
follows a different distribution.

\para{Experimental Results.}
We perform experiments on our $6$ Facebook datasets, the random walk Facebook
graph, and the Orkut dataset.  Results across all graphs are highly
consistent, and we only show 3 graphs for brevity, Santa Barbara Los Angeles and London
Facebook networks. These social graphs range in size from $13$K to
$1.6$M nodes.

Figure~\ref{fig:ComplementaryCDF} plots the CCDF of the fitted models on the
three datasets.  The CCDF presents a detailed view on the tail of these
distributions. We see that none of the presented models accurately captures
the decay slope shown by the real datasets.  Both the Power-law and the Lognormal
distributions overestimate the tail. Take London for example
(Figure~\ref{fig:ComplementaryCDF}). The Power-law model overestimates 
the highest degree nodes by up to $5$ orders of magnitude compared to the real
graph.  In contrast, the Exponential distribution decays sharply, and significantly
underestimates the number of high degree nodes. On the London graph, this
error reaches $4$ orders of magnitude for nodes with degrees of $2000$.
 
The quantile-quantile plot formally quantifies the distance between the real
data and the model.  Figure~\ref{fig:quantile} shows that both the Power-law and the
Lognormal model underestimate nearly $90\%$ of nodes ({\em i.e.}
Figure~\ref{fig:quantile}, $x$-axis $\in [1, 500]$) and largely overestimate
the very high degree nodes. 
%the $0.1$-quantile which corresponds to the high degree nodes.
 In contrast,
the Exponential distribution model exhibits a reasonably good behavior along
the lower tail and the body of the distribution compared to the dataset,
although a strongly diverging slope is displayed on the high degree nodes.

We do not include plots on the comparison of probability distribution
functions because of space constraints. However, they lead to the same conclusion:
%confirm what we
%see in Figures~\ref{fig:ComplementaryCDF} and~\ref{fig:quantile}:
 the high density of
nodes with low degree are not well predicted by the Power-law and
Lognormal models, and all models fail to accurately predict the number of
high degree nodes.  Finally, while this section only reports results from
graphical assessment of goodness-of-fit, we confirm our observations using 
statistical error measures later
in Section~\ref{sec:statistical_analysis}.

 \section{Complex distribution models}\label{sec:newdistribution}

Our experimental results show that none of the elementary distributions, including
the Power-law distribution, provide a satisfactory fit for degree
distributions in today's OSNs.  These results, however, do lead to an
intuition where low degrees are distributed following a Pareto model while
higher degrees can be modeled with an asymptotically faster
decaying distribution.  We believe a combination of two models will offer a
better representation of the complex phenomena observed on the OSNs.  Similar
approach has been successfully used by other fields, for example, modeling
actuarial data in economics~\cite{Weibull_LNP}.  

In this section, we test our intuitions by evaluating four distribution
models with a Pareto component, each with varying degrees of accuracy and
fitting complexity.  Ultimately, we confirm via both graphical and statistical
analysis that the beginning of these distributions is characterizable
as a Pareto and the upper tail decays as a Lognormal, and that the
Pareto-Lognormal distribution provides the best combination of accuracy and
fitting overhead.

\subsection{Pareto with Exponential Cutoff}
Prior work has proposed modeling human behavior using the Pareto with
Exponential cutoff distribution~\cite{pareto_cutoff}, where
the beginning of the distribution follows a Pareto and the tail exponentially
decays. This formulation is particularly interesting to our study 
given the decay observed in the tail of our data (see Figure~\ref{fig:ComplementaryCDF}). 
We study this distribution using the derivations from~\cite{parto_exp_cutoff}.
%mathematical formulation presented 
%in~\cite{parto_exp_cutoff}.
%The mathematical form of the Pareto with Exponential cutoff is:
%$$f(x)=c x^{-\alpha'}p^x $$
%with $0 < p < 1$, $\alpha'$ is the exponent of the Pareto component and $c$
%is the normalization constant.  One of the challenges with the Power-law
%model is that when fit to real graphs, its $\alpha$ exponent is always
%$<2$. But a Power-law distribution with $\alpha<2$ has
%infinite expectation, which means that the mean of this distribution
%does not converge to any value. 
%The Pareto with exponential
%cutoff is particularly interesting because its $\alpha'$ should be smaller, 
%provides a mathematical formulation that avoids the Power-law limitation.
% Indeed, this formulation is usually employed when the exponent of the
% distribution is less than or equal to two which is exactly what we found in
% our datasets.

\para{Fitting to real datasets.}  Figure~\ref{fig:ccdf_cutoff} shows how well
the Pareto with Exponential cutoff distribution fit the Santa Barbara, Los Angeles and
London datasets.  We plot the CCDF to graphically show how the tail of this
distribution decays compared to real datasets.  While this model is able to
closely fit the low degree nodes of the real data, it fails short in capturing
the upper tail by underestimating the density of nodes with high degrees.  We
conclude that the exponential cutoff is too sharp to properly capture OSN
connectivity of central (high-degree) nodes.  Intuitively, we are looking for 
a more gradual decay slope.

\subsection{Pareto-Lognormal Distribution}
\label{sec:likelihood_formalization}
Given the low accuracy in modeling the upper tail of our data with the Pareto
with exponential cutoff, 
we turn our attention to a family of distributions that mixes Pareto with
Lognormal distributions. These models, compared to the Pareto with exponential
cutoff model, provide a smother decay on the upper tail
that provides a better match for our real data.  
We start with the Double Pareto-Lognormal (DPLN)
distribution introduced by Reed in~\cite{Reed03thedouble}.  The DPLN is a
complex model with the ability to fuse two Pareto and a Lognormal
distribution.  It includes four parameters: two Pareto
exponents $\alpha$ and $\beta$ that identify the slope of the upper and
lower tails of the distribution, and $\mu$ and $\tau$ that describe the
Lognormal parameters connecting the two Pareto tails.  The DPLN also gives
rise to two other distributions~\cite{Reed03thedouble}: the Pareto-Lognormal (PLN) and the
Lognormal-Pareto (LNP).  Both are expressed by three
parameters: Pareto exponent $\beta$, and Lognormal
components $\mu$ and $\tau$. 

Next, we derive the precise formulation of the PLN distribution in terms of the
PDF, CDF and the likelihood function.  We omit the detailed derivation on
DPLN and LNP for brevity. In Section~\ref{sec:likelihood_formalization} and
\ref{sec:statistical_analysis}, we use both graphical and statistical assessments
of goodness-of-fit to compare these three distributions.

\para{Pareto-Lognormal PDF and CDF Derivation.} The PLN is expressed as the
combination of two probability distributions.  

We derived the correct formula of this distribution (which is a {\em
  limit form} of the DPLN) using results on the
DPLN~\cite{Reed03thedouble}. 
%begin our analysis by understanding the probability distribution function of the 
%Pareto-Lognormal distribution and its parameters. 
The Pareto-Lognormal probability distribution function is:
\begin{equation}\label{pdf_pareto_log}
f(x)=\beta x^{\beta-1}e^{(-\beta\mu+\frac{\beta^2\tau^2}{2})}\Phi^c\left( \frac{\log{x}-\mu+\beta \tau^2 }{\tau}\right)
\end{equation}
where $\beta$ characterizes the slope of the lower 
tail of this distribution which follows a Pareto behavior,
and $\mu$ and $\tau$ characterize the body and the 
upper tail of this distribution, which approximate a Lognormal 
decline. We also derived the Pareto-Lognormal cumulative distribution
function, formalized as follows:

\begin{small}
\begin{equation}\label{cdf_pareto_log}
 F(x)=\Phi\left( \frac{\log{x}-\mu}{\tau}\right)+x^{\beta}e^{(-\beta\mu+\frac{\beta^2\tau^2}{2})}\Phi^c\left( \frac{\log{x}-\mu+\beta \tau^2 }{\tau}\right)
\end{equation}
\end{small}
with $E[X]=\mu- \frac{1}{\beta}$ and $VAR[X]=\tau^2 +
\frac{1}{\beta^2}$. In Section~\ref{sec:implication}, we will analyze the
CDF and formally prove that the Pareto component
describes the low values of the distribution and the Lognormal one dominates
the high values.

\para{Pareto-Lognormal Likelihood Derivation}
\begin{figure}[t]
%\begin{minipage}[h]{0.9\columnwidth}
\centering
\epsfig{figure=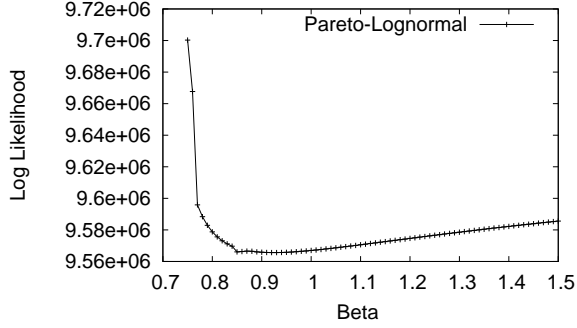,width=3.2in}
%\vspace{-0.15in}
 \caption{\small The $\beta$
 value corresponding to the minimum reverse log likelihood value maximizes the 
 likelihood of the Pareto-Lognormal distribution to fit the London sample.}
\label{fig:likelihood}
%\vspace{-0.17in}
\end{figure}
The likelihood function estimates the parameters of a distribution function in light of the observed data.
The likelihood $L$ is a function of parameters $\theta$ of the distribution and
is defined as follows:
$$L(\theta)=\prod_{i=1}^n {f(x_i | \theta)}$$
Using the definition of the Pareto Lognormal distribution in
Equation~\ref{pdf_pareto_log}, the likelihood becomes:
$$L(\beta,\mu,\tau)=\prod_{i=1}^n {\beta x_i^{\beta-1}e^{(-\beta\mu+\frac{\beta^2\tau^2}{2})}\Phi^c\left( \frac{\log{x_i}-\mu+\beta \tau^2 }{\tau}\right)}$$
The likelihood is defined as a product, and maximizing a product is usually
more difficult than maximizing a sum. Instead, we use a monotonously
increasing conversion function to transform the function $L(\beta,\mu,\tau)$
into a new function $L'(\beta,\mu,\tau)$, such that $L(\beta,\mu,\tau)$ and
$L'(\beta,\mu,\tau)$ have their maximum values for the same $\beta$, $\mu$ and
$\tau$ values.  The monotonous transformation we use is the logarithmic
function, which turns the maximization of a product into an easier
maximization of a sum.  For simplicity, let $A_0$ be
$-\beta\mu+\frac{\beta^2\tau^2}{2}$, then
$$\log{L}= \log{\prod_{i=1}^n {f(x_i)}}=\sum_{i=1}^n \log{f(x_i)}=$$
\begin{small}
$$\sum_{i=1}^n \log{\beta} +\sum_{i=1}^n \log{x_i^{\beta-1}} +\sum_{i=1}^n \log {e^{(A_0)}} 
\sum_{i=1}^n \Phi^c\left( \frac{\log{x_i}-\mu+\beta \tau^2 }{\tau}\right)=$$
%$$=n\log{\beta} +(\beta-1)\sum_{i=1}^n \log{x_i} + n (-\beta\mu+\frac{\beta^2\tau^2}{2}) + $$
\begin{equation}\label{log_likelihood}
n\log{\beta} +(\beta-1)\sum_{i=1}^n \log{x_i} + nA_0+\sum_{i=1}^n \Phi^c\left( \frac{\log{x_i}-\mu+\beta \tau^2 }{\tau}\right)
\end{equation}
\end{small}

\begin{figure*}[t]
\begin{minipage}[h]{0.66\columnwidth}
\centering
\epsfig{figure=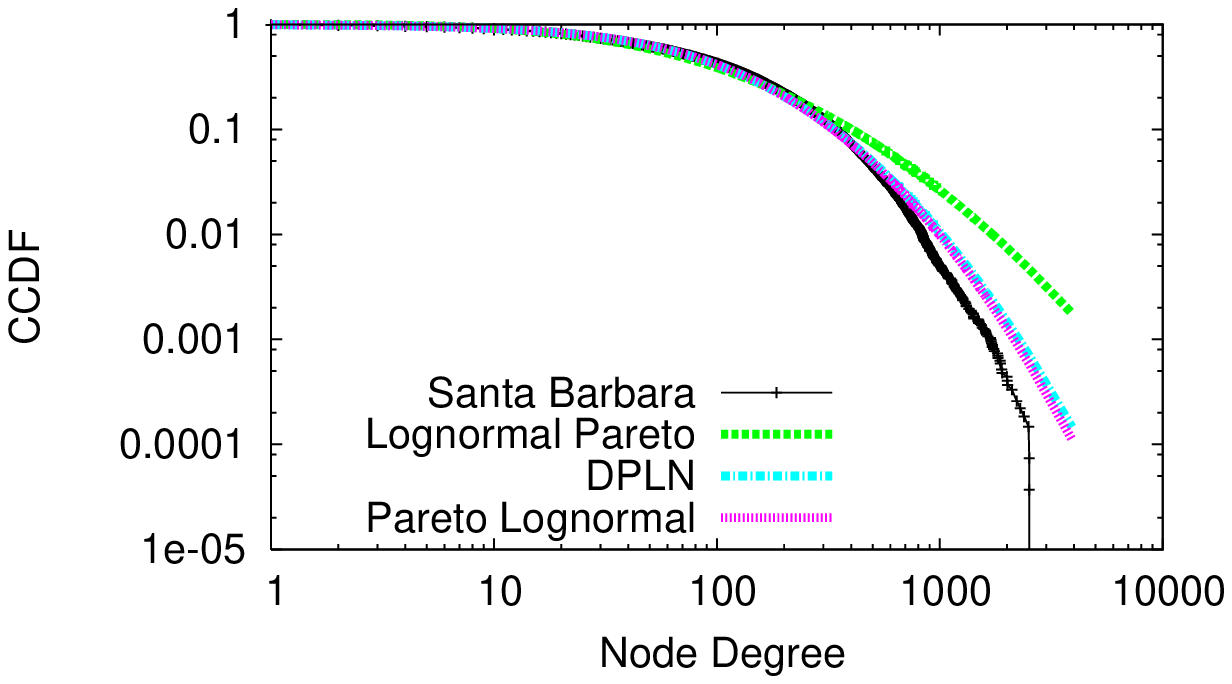,height=1.25in}
%\vspace{-0.1in}
\end{minipage}
\hfill
\begin{minipage}[h]{0.66\columnwidth}
\centering
\epsfig{figure=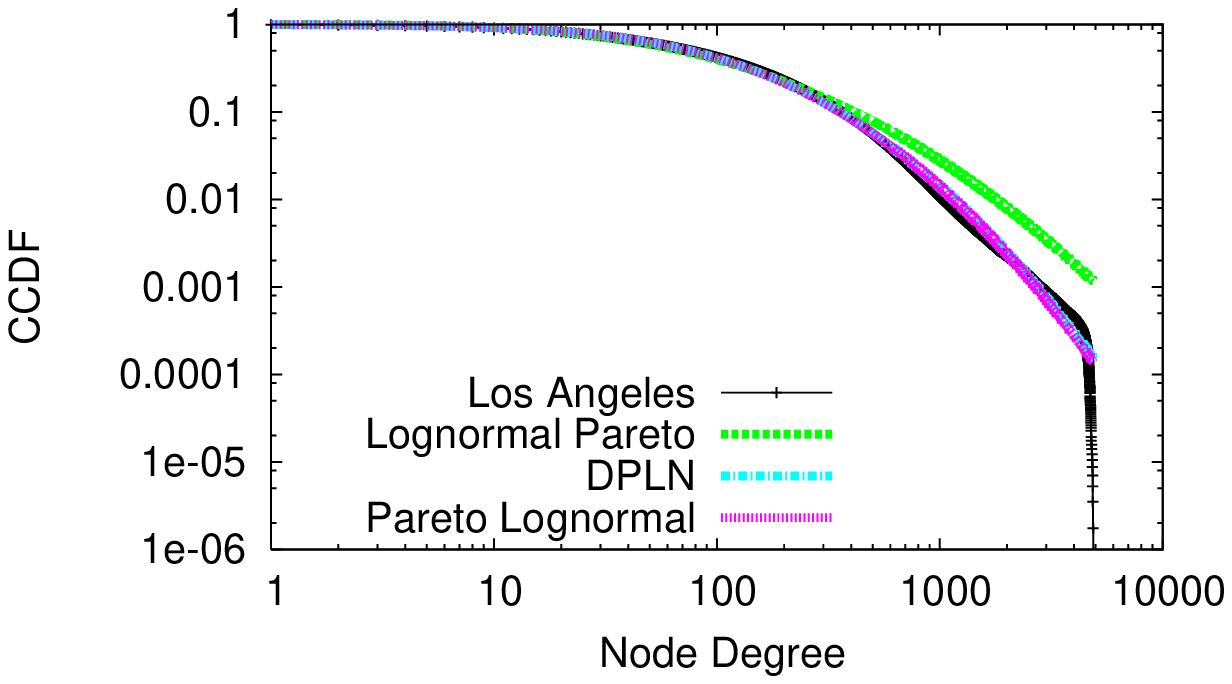,height=1.25in}
%\vspace{-0.1in}
\end{minipage}
\hfill
\begin{minipage}[h]{0.66\columnwidth}
\centering
\epsfig{figure=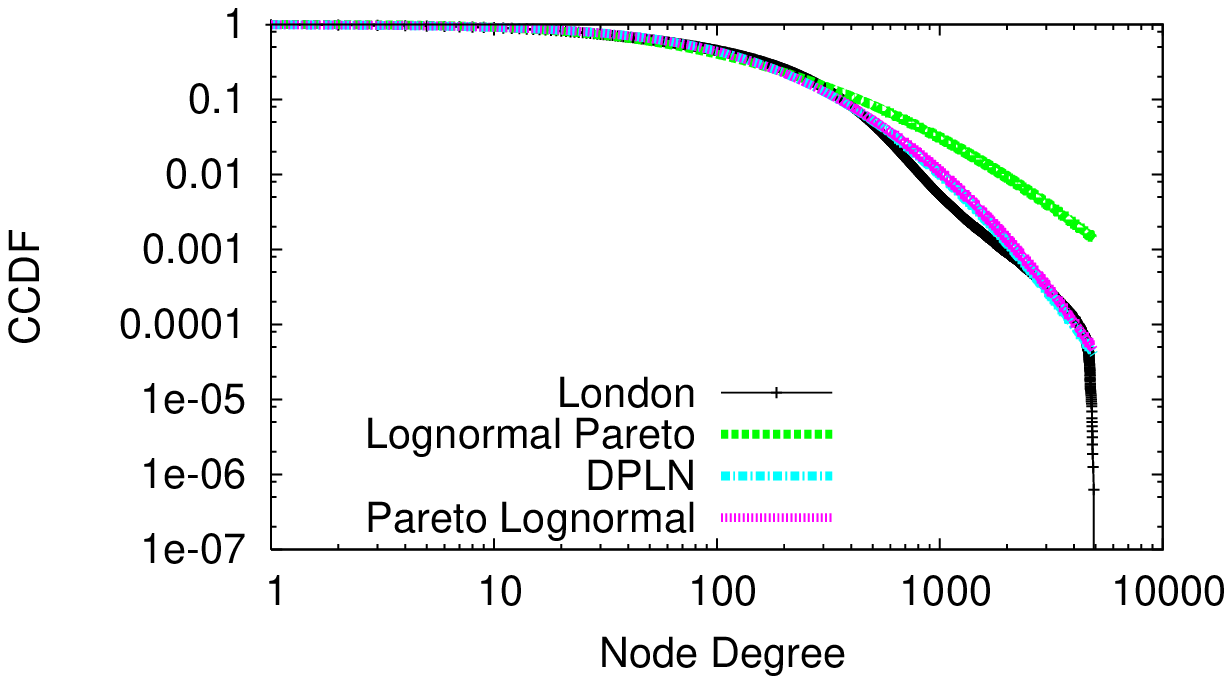,height=1.25in}
%\vspace{-0.1in}
\end{minipage}
%\vspace{-0.1in}
\caption{\small The CCDF of the DPLN and its two limit forms, {\em i.e.}
  Pareto-Lognormal (PLN) and Lognormal Pareto (LNP) fitted on Santa Barbara,
  Los Angeles and London networks. The PLN provides the same accurate fit as
  the DPLN but with much lower fitting complexity.}
 \label{fig:pdf_ccdf1}
\end{figure*}

We use~(\ref{log_likelihood}) to estimate the parameters $\beta$,
$\mu$ and $\tau$ in order to fit the Pareto-Lognormal model to our real OSN graphs.  We also {\em reverse the sign} of
(\ref{log_likelihood}) such that the likelihood to fit the data is
maximized by minimizing $-L'(\beta,\mu,\tau)$. 
%Optimizing for a minimum error value, this metric is now consistent 
%with the two statistical Goodness-of-fit tests we will use later in 
%Section~\ref{sec:statistical_analysis}.

\begin{figure}
\centering
\epsfig{figure=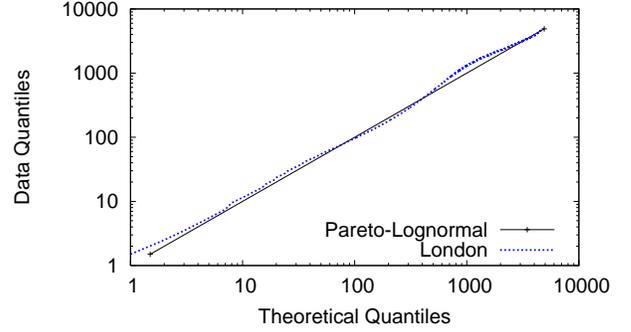,width=3.2in}
%\vspace{-0.18in}
\caption{\small The Pareto-Lognormal Quantile-Quantile plot on London shows that
  this model almost perfectly approximates the real data.}
\label{fig:overhead-gia}
\end{figure}

\begin{table*}[t]
\begin{center}
\begin{small}
\begin{tabular}{|c|c|c|c|c|c|c|c|c|}
\hline
%  && \multicolumn{2}{|c|}{Scalar  values} \\ \cline{3-4}
 Real  & Statistical &\multicolumn{7}{|c|}{Distribution Models} \\ \cline{3-9}
 Graph  &  Methods & Power-law & Lognormal & Exponential &  Pareto-Exp. & LNP & PLN & DPLN\\ \hline
 \hline
Monterey & Log L. & 8.75338e+04& 7.83008e+04 & 7.78214e+04 & 7.76411e+04 &  7.76411e+04  & {\bf 7.74057e+04} & 7.74089e+04\\
nodes $13.843$ & AIC & 1.75069e+05 & 1.56605e+05 & 1.55644e+05 & 1.55188e+05& 1.55288e+05  & {\bf 1.54817e+05} & 1.54825e+05 \\
edges $\approx 704$K & RSS &  3.05267e+02 & 2.07443e+00& 4.86935e-01& 3.43766e-01& 2.59320e-01  &{\bf 2.26463e-02} & 2.32905e-02 \\ \hline
\hline

Santa Barbara & Log L. & 1.80247e+05& 1.63328e+05 & 1.61209e+05 & 1.60516e+05 & 1.61152e+05  & {\bf1.60448e+05} & 1.60455e+05\\
nodes $27.140$ & AIC & 3.60497e+05 & 3.26660e+05 & 3.22421e+05 &3.21036e+05 & 3.22311e+05  & {\bf3.20902e+05} & 3.20919e+05 \\
edges $\approx 2$M & RSS &  4.38954e+02 & 5.81415e+00 & 7.31640e-01 & 3.37774e-01 & 6.10471e-01  & {\bf 7.44256e-02} & 7.74719e-02 \\ \hline
\hline

Egypt & Log L. & 1.62073e+06 & 1.53376e+06 & 1.53832e+06 & 1.50289e+06& 1.49892e+06& 1.49855e+06 & {\bf1.49830e+06} \\
nodes $283$K & AIC &3.24148e+06 & 3.06754e+06 & 3.07664e+06 & 3.00579e+06 & 2.99785e+06 & 2.99710e+06 & {\bf 2.99661e+06} \\
edges $\approx 11$M & RSS & 1.17928e+03 & 7.68335e+00& 1.69934e+00 & 3.43834e-01& 2.13671e-01 & {\bf 1.58090e-01} & 2.03417e-01\\ \hline
\hline

Los Angeles & Log L. & 3.81640e+06 & 3.45975e+06 & 3.44290e+06 & 3.42677e+06& 3.41280e+06 & {\bf 3.40372e+06} & 3.40379e+06\\
nodes $572$K & AIC & 7.63281e+06 & 6.91950e+06 & 6.88581e+06 & 6.85354e+06& 6.82561e+06 & {\bf 6.80745e+06} & 6.80759e+06  \\
edges $\approx 43$M & RSS & 1.74060e+03 & 8.60777e+00 & 1.50800e+00 & 8.55362e-01 & 5.02215e-01 & 4.82168e-02 & {\bf4.75582e-02} \\ \hline
\hline

New York & Log L. & 5.66194e+06 & 5.23512e+06 & 5.17802e+06 & 5.12808e+06 & 5.13177e+06 & {\bf 5.11728e+06} &5.11753e+06 \\
nodes $855$K & AIC & 1.13239e+07 & 1.04702e+07 & 1.03560e+07 & 1.02561e+07 & 1.02635e+07 & {\bf 1.02345e+07} & 1.02350e+07  \\
edges $\approx 66$M & RSS & 1.81230e+03 & 1.65956e+01 & 1.87038e+00 & 5.88628e-01 & 7.90849e-01 & 1.70891e-01 & {\bf 1.65433e-01} \\ \hline
\hline

Manhattan R.W.& Log L. & 6.76810e+06 & 5.92193e+06 & 5.86365e+06 & 6.02838e+06 & 5.89459e+06  & {\bf 5.85632e+06} & 5.85679e+06  \\
nodes $957$K & AIC &  1.35362e+07 & 1.18438e+07 & 1.17273e+07 & 1.20567e+07 & 1.92208e+07  & {\bf 1.17126e+07}  &  1.17135e+07 \\
edges $\approx  80$M & RSS & 1.31910e+03 & 4.01862e+00& 9.62000e-02 & 8.07448e+00 & 6.29269e-01  & 2.44948e-02 &  {\bf  2.34597e-02} \\ 
\hline

London & Log L. & 1.07131e+07& 9.75799e+06 & 9.60413e+06 & 9.65522e+06 & 9.61043e+06 & {\bf 9.56562e+06} & 9.56626e+06 \\
nodes $1.6$M & AIC & 2.14262e+07& 1.95159e+07 & 1.92082e+07 & 1.92104e+07 & 1.17892e+07  & {\bf 1.91312e+07}  & 1.91325e+07 \\
edges $\approx 118$M & RSS & 1.79927e+03 & 1.24657e+01& 7.33091e-01 & 2.30840e-01 & 1.26502e+00 & {\bf 1.99847e-01} & 2.12646e-01 \\ 
\hline

Orkut & Log L. & 1.81346e+07 & 1.63179e+07 & 1.62442e+07 &1.61573e+07 & 1.61822e+07  & 1.60868e+07 & {\bf 1.60829e+07} \\
nodes $3$ M & AIC & 3.62692e+07  & 3.26358e+07 & 3.24885e+07 & 3.23147e+07& 3.23645e+07  & 3.21738e+07 & {\bf 3.21659e+07} \\
edges $\approx 111$ M & RSS & 1.94280e+03 & 2.41064e+00 &3.75274e-01 & 2.95699e-01&4.16978e-01  & 1.60096e-02 & {\bf1.01767e-02} \\ \hline
\end{tabular}
\end{small}
\end{center}
\caption{\small Quantifying the ``Goodness of the fit'' of $7$ distribution
  models via statistical methods on $6$ Facebook datasets crawled in $2009$,
  and on the Orkut dataset.}
\label{table:stats}
%\vspace{-0.15in}
\end{table*}

\para{Parameter estimation.}
The Pareto-Lognormal distribution has three parameters ({\em i.e.} $\beta$,
$\mu$ and $\tau$) to estimate in order to fit real data.  We perform a grid
search, {\em i.e.} a multi-dimensional numerical search over the parameters
space, to identify the best triplet of values to fit a particular dataset, as in ~\cite{mobile_dpln}.

We bound the search of the parameters $\mu$ and $\tau$ using the {\em
  Moment Method Estimation} to determine the initial values, and then 
refine the search around those values to identify the best ones. 
While this methodology could lead to suboptimal results that lower 
the performance of our models when compared to those using 
optimal parameter configurations, we choose it because of its computational efficiency.

We use the likelihood metric as the objective function in our parameter search~\cite{Arnold_pareto}, {\em i.e.\/} our goal is to 
minimize the reverse log likelihood ``$-L(\beta,\mu,\tau)$'' when
searching for the optimal ($\beta$,
$\mu$, $\tau$)~\cite{Arnold_pareto}.  
To show that there is a clear concave
trend around the minima, we plot in Figure~\ref{fig:likelihood} the values of $-L(\beta,\mu,\tau)$  for the London dataset as a function of $\beta$.
We have identified similar trends in all our datasets.

\para{Fitting to real datasets.} Figure~\ref{fig:pdf_ccdf1} examines the results of
fitting the PLN, DPLN and LNP distributions to our graphs as CCDF
plots. Among these three models, LNP overestimates the data on the
upper tail, which suggests that the decay of the right tail follows more of a
Lognormal model than a Pareto one.  The flexibility of the DLPN model, with
its four parameters, allows for a very good fit. However, it comes at the cost
of a higher fitting complexity which grows exponentially in the number of parameters. 
On the other hand, the PLN distribution
achieves the accuracy of the DPLN, and also has the reduced
fitting complexity of LNP ({\em i.e.} 3 parameters instead of 4).
Figure~\ref{fig:pdf_ccdf1} clearly shows that on all three datasets ({\em
  i.e.} Santa Barbara, Los Angeles and London), PLN and DPLN overlap along
the entire distribution.  In addition, we use the Quantile-Quantile plot 
to evaluate the fitting accuracy of PLN, and show the results for London 
in Figure~\ref{fig:overhead-gia}. This plot is representative and others 
are omitted for brevity.  It shows a near perfect fitting from the 
Pareto-Lognormal to the real London graph.

\subsection{Statistical Analysis}
\label{sec:statistical_analysis}
The graphical assessments in Section~\ref{sec:elementary}
 ({\em i.e.} the PDF, CCDF and the
Quantile-Quantile plot) are the first step towards a complete
characterization of these distributions.  We now look at statistical measures
to quantify how well each model fits real social graphs~\cite{model_selection}.
 
\para{Goodness-of-Fit Analysis.}
 We consider three
measures to evaluate the goodness of the proposed statistical models, including the likelihood function,
the Akaike's Information Criterion and the Residual Sum of Squares.

The {\em likelihood function} is used to quantify the likelihood
that a particular model fits a given dataset.
Section~\ref{sec:likelihood_formalization} explained how to 
estimate this function for PLN by computing the minimum of
the reverse log likelihood. We compute the minimum of the reverse of the log
likelihood function for each model such that the best fit models generate the
lowest values.  This matches the two other measures, where the best models also
generate the smallest fitting errors.
 
{\em Akaike's Information Criterion} 
(AIC)~\cite{Akaike} is a measure of the quality of
the fit that is capable of capturing the tradeoff between accuracy and fitting costs.
The AIC value is computed as: $AIC = 2k - 2\log{L}$ where $k$ is the number
of parameters in the statistical model, and $L$ is the maximized value of the
likelihood function for the estimated model.  The value $k$ in the AIC test
is used to tradeoff the accuracy of a model with its complexity. The model
that shows the lowest AIC is considered the one that best fits the data.

{\em Residual Sum of Squares}(RSS)~\cite{mobile_dpln} is a statistical method that computes the sum
of squares of residuals between the empirical distribution and the data sample.
It measures the discrepancy as euclidean distance between the data and the
estimation model. A small RSS
indicates a tight fit of the model to the data, and it can be formally expressed as: 
$RSS=\sum{(y_i-f(x_i))^2}$, where $y_i$ is the empirical evaluation and
$f(x_i)$ is the estimate value of the statistical model.

\para{Error Measures  on Real Traces.}\label{sec:fitting data} 
We now compare models based on the numerical values of these three 
statistical methods computed on our OSN datasets.  
%Six Facebook 
%regional networks, from our $2009$ crawl, have been selected as 
%representative of our pool of data. The considered datasets range in 
%size from Monterey with $13$k nodes and $704$K edges to London with 
%$1.6$M nodes and $118$ M edges; 
We explore sample size variation to 
prove that the datasets manifest the same trend across different sample 
sizes.

In Table~\ref{table:stats} we report a set of statistical values to quantify
the goodness-of-the-fit for each of the {\em seven} analyzed models.  We
highlight in bold the smallest values ({\em i.e.} those that identify
the best model) for each metric.
  
Across all datasets, the Power-law model consistently performs the
worst.  Its values of Log Likelihood and AIC test are the highest and the RSS
values are up to $4$ orders of magnitude higher than the best model.  The second
worst model on our measured datasets is the Lognormal.  Its RSS values are up
to $3$ order of magnitude larger than the best model, due to the high
imprecision in estimating the high degree nodes. 
Exponential, Pareto with
Exponential cutoff and LNP provide reasonable accuracy
in the sense their values are with $1$ order of magnitude from the best.
 
The two best models are DPLN and PLN.  The RSSs
for some datasets identify DPLN as the best model, with a very small
difference separating it from PLN (on the second or third decimal point). 
On the other side, the AIC test slightly
penalizes the model with more parameters, such as DPLN. 
Based on the results of both the likelihood and
the AIC,  we see that PLN does consistently well on all our Facebook datasets. 
Finally, analysis of the Orkut graph produces results consistent to our Facebook
graphs. 

Overall we see a consistent trend: Power-law does not produce accurate
results, and the best models are PLN and DPLN. Only in a small number of
cases, DPLN is slightly more accurate than PLN, but differences are exceptionally small compared to other
models.  Given the significant increase in fitting
complexity for DPLN ({\em i.e.\/} deriving 4 rather than 3 parameters), the PLN model clearly produces the best combination of
fitting accuracy and complexity.

\section{Implications of the PLN model}
\label{sec:implication}

We have shown that the PLN model
provides a much more accurate fit to today's OSN graphs. But are these
differences large enough to really matter?
In this section, we answer this question by quantifying the magnitude of
error introduced by the Power-law model, and in the process, show that
social algorithms and protocols based on the Power-law assumption must be
re-evaluated (and some re-designed) using the PLN model.  

The remainder of this section includes three parts. First, we
analyze PLN using its CDF to characterize the asymptotical slope of
its tail, which we use later to derive more complete bounds,
in order to better understand the high degree nodes of these networks.
Second, for both  PLN and Power-law model, we analytically determine a
 close form to bound the lowest degree of high degree nodes. 
By comparing these bounds, we formally quantify the divergence of Power-law from PLN. 
We also validate our analytical results from empirical analysis on our
Facebook and Orkut graphs.
Finally, for both models, 
we approximate the cardinality of
high degree nodes (those with degrees in the top 10\% of the network), and evaluate the discrepancy between the two. 
Again, we validate our analytical results using our real datasets to
understand the actual prediction errors from both models.

\subsection{Modeling High Degree Nodes with PLN}\label{F_limit}
The cardinality of high degree nodes is a key factor in designing social applications and protocols. 
To characterize this distribution in the PLN model, it is necessary to understand the limit behavior on its CDF (or $F(x)$), 
defined by (\ref{cdf_pareto_log}).  We aim to bound the limit behavior so that we can directly
derive the expected number of high degree nodes and their connectivity. 
Let $z=\frac{\log{x}-\mu}{\tau} $,  $A =
e^{(-\beta\mu+\frac{\beta^2\tau^2}{2})}$, we have
\vspace{-0.1in}

\begin{small}
\begin{eqnarray}\label{F_of_x}  
F(x)=\Phi(z)+x^\beta A\; \Phi^c(z+\beta\tau)
\end{eqnarray}
\end{small}
%\vspace{-0.15in}
%In order to estimate the number of high degree nodes, it is necessary to
%express the standard normal cumulative distribution function $\Phi(z)$,
%in equation~\ref{F_of_x}, with a more explicit form.  
Leveraging the {\em erf}
function ({\em i.e.} the Gauss error function, which is a special function of
sigmoid shape), we can express the standard normal cumulative distribution
$\Phi(x)$ and its complementary form $\Phi^c(x)$ respectively as follows:
\vspace{-0.15in}

\begin{small}
\begin{eqnarray}\label{approx_phi}  
\Phi(x)=\frac{1}{2} [ 1+erf(\frac{x}{\sqrt{2}}) ]=\frac{1}{2}erfc(-\frac{x}{\sqrt{2}})
\end{eqnarray}
\end{small}
\vspace{-0.1in}
\begin{small}
\begin{eqnarray}\label{approx_phic} 
\Phi^c(x)=1-\frac{1}{2} [ 1+erf(\frac{x}{\sqrt{2}}) ] =\frac{1}{2}erfc(\frac{x}{\sqrt{2}})
\end{eqnarray}
\end{small}

\vspace{-0.05in}
\noindent with $x \in \Re$. Note that by definition, $erf(-x)=-erf(x)$ and $erfc(x)=1-erf(x)$.
%Leveraging the approximated Gauss error function, 
We can now reformulate (\ref{F_of_x})  
through the use of Gauss error functions as:
\begin{equation}\label{approx_F(x)}
F(x)=\frac{1}{2}erfc \left(- \frac{z}{\sqrt{2}}\right)+x^{\beta}A\frac{1}{2}erfc \left( \frac{z+\beta\tau}{\sqrt{2}}\right)
\end{equation}
Next we use the asymptotical expansion of the Gauss error function to further expand $F(x)$. For large $x$, the
$erfc()$ function can be approximated with the following series:
\vspace{-0.05in}
\begin{equation}\label{approx_erfc}
erfc(x)\approx \frac{e^{-x^2}}{x\sqrt{\pi}}\sum_{n=0}^{+\infty} (-1)^n\frac{2n!}{n!(2x)^{2n}}
\end{equation}
Without loss of generality, in the remainder of this analysis, we will only
use the first term of the series in (\ref{approx_erfc}). This is
because it achieves enough accuracy to accomplish the goals of this
investigation. While we omit a formal error analysis of the loss in the
approximation, we will show via experimental
analysis that the loss is negligible, and the results are sufficiently
accurate.  Thus the $erfc()$ function is approximated as follows:
\begin{equation}\label{approx_erfc_simple}
erfc(x)\approx \frac{e^{-x^2}}{x\sqrt{\pi}}
\end{equation}

\begin{lemma}\cite{Reed03thedouble}\label{lem:upper}
  For a sufficiently large $x$, $1-(\Phi(z)+x^\beta A \Phi^c(z+\beta\tau))$
  goes to $\Phi^c(z)$.
\end{lemma}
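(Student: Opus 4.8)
The plan is to start from the complementary CDF and peel off the Lognormal piece explicitly. Using $\Phi(z)+\Phi^c(z)=1$ in~(\ref{F_of_x}), I get $1-F(x)=\Phi^c(z)-x^{\beta}A\,\Phi^c(z+\beta\tau)$, so the entire claim reduces to showing that the subtracted term $x^{\beta}A\,\Phi^c(z+\beta\tau)$ does not generate a heavier (Pareto-type) tail than the one carried by $\Phi^c(z)$. Concretely, I want to establish that for large $x$ the Pareto factor $x^{\beta}$ is neutralized by the normal tail $\Phi^c(z+\beta\tau)$, so that the surviving decay of $1-F(x)$ is the Lognormal decay carried by $\Phi^c(z)$, and the Pareto component is asymptotically subdominant.

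First I would substitute the single-term asymptotic expansion~(\ref{approx_erfc_simple}) into the representation~(\ref{approx_phic}) of both $\Phi^c(z)$ and $\Phi^c(z+\beta\tau)$. The key algebraic step is to rewrite the Pareto prefactor: since $z=\frac{\log x-\mu}{\tau}$ gives $\log x=\mu+\tau z$, I have $x^{\beta}A=e^{\beta\mu+\beta\tau z}\,e^{-\beta\mu+\beta^{2}\tau^{2}/2}=e^{\beta\tau z+\beta^{2}\tau^{2}/2}$. Multiplying this by the Gaussian factor $e^{-(z+\beta\tau)^{2}/2}$ coming from $\Phi^c(z+\beta\tau)$, the cross term $\beta\tau z$ and the quadratic term $\beta^{2}\tau^{2}/2$ cancel exactly, leaving the residual factor $e^{-z^{2}/2}$ --- precisely the Gaussian rate appearing in $\Phi^c(z)$. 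This is the crux: the power-law growth $x^{\beta}$ is exactly absorbed by the exponential decay of the shifted normal tail, so the subtracted term decays at the Lognormal rate $e^{-z^{2}/2}$, not at any power-law rate, and hence cannot produce a Pareto-type upper tail.

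Having matched the exponential orders, I would then compare prefactors. The two surviving pieces are $\frac{1}{z}\,\frac{e^{-z^{2}/2}}{\sqrt{2\pi}}$ and $\frac{1}{z+\beta\tau}\,\frac{e^{-z^{2}/2}}{\sqrt{2\pi}}$, whose ratio $\frac{z}{z+\beta\tau}\to 1$ as $x\to\infty$ (since $z\to\infty$). Consequently $1-F(x)$ is controlled throughout by $\Phi^c(z)$ and inherits its Lognormal super-polynomial decay, which is the content of the lemma: for sufficiently large $x$ the CCDF of the PLN is governed by its Lognormal component $\Phi^c(z)$, while the contribution of the Pareto component $x^{\beta}A\,\Phi^c(z+\beta\tau)$ becomes asymptotically negligible as a shape in the tail.

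The main obstacle is precisely this prefactor comparison. A naive reading would try to show the subtracted term is negligible against $\Phi^c(z)$, but it is not --- both have the same leading order $e^{-z^{2}/2}/z$ and nearly cancel. The delicate point is to recognize that it is the \emph{exponential} cancellation (the absorption of $x^{\beta}$ by the shifted normal tail), not the vanishing of the subtracted term, that drives the result, so one must carry the expansion far enough to expose the $\frac{z}{z+\beta\tau}\to 1$ behavior rather than discarding the Pareto term prematurely. A secondary subtlety is justifying that truncating~(\ref{approx_erfc}) to its first term does not affect the conclusion: the higher-order corrections are $O(1/z^{2})$ relative and only refine the polynomial prefactor, leaving the Gaussian rate --- and hence the lemma --- intact.
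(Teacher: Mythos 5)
The first thing to note is that the paper contains no proof of this statement to compare against: Lemma~\ref{lem:upper} is stated with a citation and imported wholesale from Reed and Jorgensen~\cite{Reed03thedouble}, so your argument has to stand on its own. Its algebra is correct, and the identity you isolate is genuinely the crux: since $\log x=\mu+\tau z$, one gets $x^{\beta}A=e^{\beta\tau z+\beta^{2}\tau^{2}/2}$, hence $x^{\beta}A\,e^{-(z+\beta\tau)^{2}/2}=e^{-z^{2}/2}$ exactly, and the Pareto factor is absorbed by the shifted Gaussian. The problem is the final inference, which runs backwards. You establish that $x^{\beta}A\,\Phi^c(z+\beta\tau)$ and $\Phi^c(z)$ agree to leading order (ratio $\tfrac{z}{z+\beta\tau}\to1$) and conclude that $1-F(x)$ ``is controlled by'' $\Phi^c(z)$ and inherits its decay. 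Asymptotic equivalence of the two terms implies the opposite: their difference, which is exactly $1-F(x)$, is of strictly smaller order than either term. Carrying your expansion one line further,
\[
1-F(x)\;\approx\;\frac{e^{-z^{2}/2}}{\sqrt{2\pi}}\left(\frac{1}{z}-\frac{1}{z+\beta\tau}\right)
\;=\;\frac{e^{-z^{2}/2}}{\sqrt{2\pi}}\,\frac{\beta\tau}{z(z+\beta\tau)}
\;\sim\;\frac{\beta\tau}{\sqrt{2\pi}}\,\frac{e^{-z^{2}/2}}{z^{2}},
\]
so that $(1-F(x))/\Phi^c(z)\sim\beta\tau/z\to0$, not $1$. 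Under the reading your proof aims at --- and the reading the paper itself uses in Lemma~\ref{lem:xi}, where $1-F(x)$ is replaced by $\Phi^c(z)$ and then by its first-order expansion --- your computation, carried to completion, refutes the statement rather than proves it. Your closing sentence (the Pareto component ``becomes asymptotically negligible'') also contradicts your own correct observation two sentences earlier that it is not negligible: it cancels the bulk of $\Phi^c(z)$.

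What is actually true, what \cite{Reed03thedouble} establishes, and what the paper needs, is the weaker \emph{shape} statement: the upper tail of the PLN decays at the lognormal rate $e^{-(\log x-\mu)^{2}/(2\tau^{2})}$ up to factors polynomial in $\log x$; in particular it carries no power-law tail. Your exponential-cancellation step supplies the upper-bound half of this, but precisely because of the cancellation you also need a lower bound --- a priori the difference could vanish much faster, even identically --- and that again requires the explicit prefactor difference $\frac{1}{z}-\frac{1}{z+\beta\tau}=\frac{\beta\tau}{z(z+\beta\tau)}$, the one line you omit. With it, and with the observation that the discarded terms of (\ref{approx_erfc}) contribute only $O(e^{-z^{2}/2}/z^{3})$ and hence do not disturb the leading difference, your argument becomes a correct, self-contained derivation of the tail behavior. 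The residual discrepancy ($1-F(x)$ is lighter than $\Phi^c(z)$ by the slowly varying factor $\beta\tau/z$) is harmless downstream: in Lemma~\ref{lem:xi} all prefactors polynomial in $\log x$ are dropped anyway (the ``approximation with the quadratic term''), so the quantile formula is unaffected.
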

The claim states that for large $x$ the PLN distribution has a Lognormal form with the parameters of the
PLN model. We will use this Lemma extensively in the following discussions.

\subsection{Quantile Analysis: a Degree Threshold}\label{quantile}

Next, we quantify how well the PLN and the Power-law models
predict the degree threshold that defines the top $\gamma$ subset of the highest
degree nodes in the network. In other words, we wish to predict the minimum
degree $\xi_{\gamma}$, such that nodes with degree $> \xi_{\gamma}$ are in
the top $\gamma $ portion of all nodes sorted by degree.  Formally, $\xi_{\gamma}$ is
the $\gamma$-th quantile of the complementary cumulative distribution. As a
concrete example, we will use $\gamma=0.10$, {\em i.e.} top 10\%.

We compute $\xi_{\gamma}$ for both models.  For the Power-law model, the
degree $\xi_{\gamma}$ can be expressed as:
$(\frac{1}{\gamma})^{\frac{1}{\alpha}}$.  To quantify the same
quantile for the PLN distribution, we again leverage its limit behavior. We then use the asymptotic expansion of the
complementary Gauss error function to obtain a tight upper bound 
in Lemma~\ref{lem:xi}.
 
\begin{lemma}\label{lem:xi}
A tight upper bound of the $\gamma$-th quantile of the PLN distribution is  
$\xi_{\gamma}=e^{\mu+\sqrt{-2\tau^2 \log{(\frac{\sqrt{2 \pi}\gamma}{\tau})} }}$.
\end{lemma}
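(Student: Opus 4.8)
The plan is to characterize $\xi_\gamma$ directly from its defining property and then reduce the resulting equation to an elementary, solvable one. By definition $\xi_\gamma$ is the value at which the complementary CDF equals $\gamma$, i.e. $1-F(\xi_\gamma)=\gamma$. For large degrees this is exactly the regime covered by Lemma~\ref{lem:upper}, which tells us that $1-F(x)=1-\bigl(\Phi(z)+x^\beta A\,\Phi^c(z+\beta\tau)\bigr)$ collapses to $\Phi^c(z)$ with $z=\frac{\log x-\mu}{\tau}$. First I would therefore replace the defining equation by its limit form $\Phi^c\!\left(\frac{\log\xi_\gamma-\mu}{\tau}\right)=\gamma$, so that the problem becomes inverting a single standard-normal tail evaluated at $z_\gamma=\frac{\log\xi_\gamma-\mu}{\tau}$.

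Next I would make this inversion explicit using the error-function machinery already set up. Rewriting $\Phi^c$ through (\ref{approx_phic}) gives $\tfrac12\,erfc\!\left(\frac{z_\gamma}{\sqrt2}\right)=\gamma$, and substituting the one-term asymptotic expansion (\ref{approx_erfc_simple}) turns the left-hand side into an explicit expression of the form $\frac{e^{-z_\gamma^2/2}}{z_\gamma\sqrt{2\pi}}$ (up to the prefactor handling discussed below). The key simplification is that, once the slowly varying polynomial prefactor is absorbed, the equation reduces to a purely Gaussian relation $e^{-z_\gamma^2/2}=\frac{\sqrt{2\pi}\,\gamma}{\tau}$. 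This is now solvable in closed form: taking logarithms yields $z_\gamma^2=-2\log\!\left(\frac{\sqrt{2\pi}\,\gamma}{\tau}\right)$, hence $z_\gamma=\sqrt{-2\log\!\left(\frac{\sqrt{2\pi}\,\gamma}{\tau}\right)}$, and back-substituting $z_\gamma=\frac{\log\xi_\gamma-\mu}{\tau}$ and exponentiating gives $\xi_\gamma=e^{\mu+\sqrt{-2\tau^2\log((\sqrt{2\pi}\gamma)/\tau)}}$, matching the stated bound.

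The main obstacle is the polynomial prefactor. The exact asymptotic $\Phi^c(z)\approx \frac{e^{-z^2/2}}{z\sqrt{2\pi}}$ carries a $1/z$ factor, so the equation $\Phi^c(z_\gamma)=\gamma$ is genuinely transcendental and has no elementary root; the clean closed form is obtained only after this prefactor is handled, and it is precisely this step that fixes the constant $\frac{\sqrt{2\pi}\gamma}{\tau}$ appearing inside the logarithm. I would justify the resulting expression as an upper bound by a monotonicity argument: the one-term expansion of $erfc$ overestimates $\Phi^c$ on the relevant range, and since $\Phi^c$ is strictly decreasing, the value $z_\gamma$ that solves the overestimated equation is at least the true root, so $\xi_\gamma=e^{\mu+\tau z_\gamma}$ overestimates the exact quantile. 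Tightness then follows from the fact that the dropped terms of the series (\ref{approx_erfc}) vanish rapidly as $x$ grows, i.e. precisely in the high-degree / small-$\gamma$ regime where the bound is intended to be applied; I would finally note that the whole derivation is valid only for $x$ large enough that both Lemma~\ref{lem:upper} and the single-term $erfc$ approximation are accurate, which is the same ``sufficiently large $x$'' hypothesis used throughout the section.
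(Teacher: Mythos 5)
Your proposal is correct and takes essentially the same route as the paper's own proof: invoke Lemma~\ref{lem:upper} to replace $1-F(\xi_\gamma)=\gamma$ by $\Phi^c\left(\frac{\log\xi_\gamma-\mu}{\tau}\right)=\gamma$, substitute the one-term $erfc$ expansion (\ref{approx_erfc_simple}), take logarithms, drop the slowly varying polynomial prefactor (the step that, as you correctly note, produces the constant $\frac{\sqrt{2\pi}\gamma}{\tau}$ inside the logarithm), and back-substitute to obtain the closed form. Your additional monotonicity argument for why the approximation yields an \emph{upper} bound is a small refinement that the paper's proof leaves implicit.
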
 
\begin{proof}
  Given the result in Lemma~\ref{lem:upper}, we consider $1-F(x) \approx
  \Phi^c(\frac{\log{x}-\mu}{\tau})$ for large values of $x$. Since we are
  using the Pareto-Lognormal parameters, we can reformulate the
  complementary cumulative distribution function for large $x=\xi_{\gamma}$
  values as: \\
  $\frac{1}{2\sqrt{\pi}}\frac{e^{-(\log{\xi_{\gamma}}-\mu)^2/(2\tau^2)}}{\frac{\log{\xi_{\gamma}}-\mu}{\sqrt{2\tau^2}}}$.
  In order to derive the minimum degree $\xi_{\gamma}$, which characterizes
  the high degree nodes, we first quantify the $\gamma$\% of high degree
  nodes in the network, and then derive the required degree value.  We
  leverage the tail of the complementary cumulative distribution from
  Lemma~\ref{lem:upper}, and thus, the following holds:
\begin{equation}\label{eq:ccc}
\frac{1}{2\sqrt{\pi}}\frac{e^{-(\log{\xi_{\gamma}}-\mu)^2/(2\tau^2)}}{\frac{\log{\xi_{\gamma}}-\mu}{\sqrt{2\tau^2}}}=\gamma
\end{equation}
Let $y=\log{\xi_{\gamma}}-\mu$, then (\ref{eq:ccc}) becomes
$\frac{\tau}{\sqrt{2 \pi}} \frac{e^{-y^2/(2\tau^2)}}{y}=\gamma$.
Applying the logarithmic function and approximation with the quadratic term,
it reduces to $y=\sqrt{-2\tau^2 \log{(\frac{\gamma
      \sqrt{2\pi}}{\tau})}}$. By substituting $y$, we have
$\log{\xi_{\gamma}}-\mu=\sqrt{-2\tau^2 \log{(\frac{\gamma
      \sqrt{2\pi}}{\tau})}}$ or $\xi_{\gamma}=e^{\mu+\sqrt{-2\tau^2 \log{(\frac{\sqrt{2
          \pi}\gamma}{\tau})} }}$ which is the minimum degree of the high degree nodes in the $\gamma$-th quantile.
\end{proof}

%$y^2=-2\tau^2 \log{(\frac{\gamma \sqrt{2\pi}}{\tau})}$, therefore

Next, we approximate the differences among the $\xi_{\gamma}$ quantiles 
between the PLN and the Power-law distributions. 
Note that we are not approximating through a limit formulation but we are estimating 
the difference of the two, i.e. PLN and Power-law, analytical quantile values when 
$\gamma$  $\in [0.01,0.1]$. 
In order to compute the difference of the $\xi_{\gamma}$ estimated by the
Power-law and the PLN, we approximate the PLN's $\xi_{\gamma}$ as $e^{\mu}$,
since $\sqrt{-2\tau^2 \log{(\frac{\sqrt{2 \pi}\gamma}{\tau})}}$ is negligible
compared to $\mu$ when $\gamma$ is around its typical value $0.05$.

The Power-law overestimation can be computed as the ratio of the $\xi_{\gamma}$ quantiles
of the Power-law and the PLN.  
Since $\mu$ and $\frac{1}{\alpha}$ are approximated by
the mean value, $\nu$, of the sample logarithms,  the following theorem
applies:
\begin{theorem}\label{theo:quantile}
For $\gamma \in [0.01,0.1]$, the ratio of the $\xi_{\gamma}$ quantiles of the Power-law and PLN
is  $\approx(\frac{1}{e \gamma})^{\nu}$.
\end{theorem}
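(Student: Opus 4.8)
The plan is to form the ratio of the two quantile expressions directly and reduce it to a single-parameter formula using the empirical identification of the model parameters. First I would recall the two ingredients already available: for the Power-law the degree threshold is $\xi_\gamma = (1/\gamma)^{1/\alpha}$, and for the PLN, Lemma~\ref{lem:xi} gives $\xi_\gamma = e^{\mu + \sqrt{-2\tau^2 \log(\sqrt{2\pi}\gamma/\tau)}}$. The first simplification is to discard the square-root correction in the PLN expression, approximating $\xi_\gamma^{\mathrm{PLN}} \approx e^\mu$. I would justify this by checking that, for $\gamma$ in the stated range $[0.01,0.1]$ and the fitted $\tau$ values reported in Table~\ref{table:stats}, the quantity $\sqrt{-2\tau^2 \log(\sqrt{2\pi}\gamma/\tau)}$ is small relative to $\mu$, so that it contributes only a negligible multiplicative factor to the quantile.

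Next I would write the Power-law overestimation as the ratio $\frac{\xi_\gamma^{\mathrm{PL}}}{\xi_\gamma^{\mathrm{PLN}}} \approx \frac{(1/\gamma)^{1/\alpha}}{e^\mu}$. The crux is then the parameter identification: both $\mu$ and $1/\alpha$ are to be replaced by the same quantity $\nu$, the mean of the sample log-degrees. For the Lognormal location this is essentially immediate, since $\mu = E[\log X] \approx \nu$ by the definition of the moment/maximum-likelihood estimate of the Lognormal body. For the Power-law exponent, the identification $1/\alpha \approx \nu$ follows from the standard log-based estimator of the exponent; this is the step I would spell out most carefully, as it is the least transparent and is effectively the engine driving the whole theorem.

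Granting both identifications, the algebra collapses. Substituting $\mu = \nu$ and $1/\alpha = \nu$ yields $\frac{(1/\gamma)^{\nu}}{e^{\nu}} = \left(\frac{1}{\gamma}\right)^{\nu}\left(\frac{1}{e}\right)^{\nu} = \left(\frac{1}{e\gamma}\right)^{\nu}$, which is exactly the claimed expression. I expect the main obstacle to lie entirely in the two approximations rather than in any computation: specifically, defending that $1/\alpha$ and $\mu$ may both be collapsed onto the single empirical mean $\nu$, and bounding uniformly over $\gamma \in [0.01,0.1]$ the error incurred by dropping the square-root term. Since the surrounding development already relies on experimental validation to absorb precisely these approximation losses, I would close the argument by appealing to that empirical confirmation on the Facebook and Orkut graphs rather than attempting a formal error bound.
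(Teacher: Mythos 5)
Your proposal follows essentially the same route as the paper: approximate the PLN quantile by $e^{\mu}$ (dropping the square-root term as negligible for $\gamma \in [0.01,0.1]$), identify both $\mu$ and $\frac{1}{\alpha}$ with the sample log-mean $\nu$, and reduce the ratio $\frac{(1/\gamma)^{1/\alpha}}{e^{\mu}}$ to $\left(\frac{1}{e\gamma}\right)^{\nu}$. The paper likewise treats the two parameter identifications as given (asserting rather than deriving them) and leans on empirical validation for the approximation error, so your account matches its argument in both structure and level of rigor.
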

%\begin{proof}{\em Stretch}
%The proof is relatively straight-forward and is omitted for brevity.
%The rationale behind the proof is to show
%that 
%
%\begin{small}
%$$\sqrt{-2\tau^2 \log{(\frac{\sqrt{2 \pi}\gamma}{\tau})} }$$
%\end{small}
%
%\noindent is negligible,
%and that $\mu$ and $1/ \alpha$ are approximated by $\nu$. Then the proof is
%concluded by computing the ratio of the $\xi_{\gamma}$ values between the
%Power-law and the PLN.
%\end{proof}
Both the PLN and Power-law estimates will express the $\xi_{\gamma}$ degree
as an exponential function, but they have different bases. 
For the
Power-law, the base of the exponent is $\frac{1}{\gamma}=10$; for PLN it is
$e$.  This discrepancy accounts for the large difference in the predictions
made by the two models. As we will show using experimental validation on
real datasets, this difference can be as large as two order of magnitude.

\begin{table}[t]
\begin{center}
\begin{small}
\begin{tabular}{c|c|c|c}
\hline
%  && \multicolumn{2}{|c|}{Scalar  values} \\ \cline{3-4}
Real  &\multicolumn{3}{|c}{  $\xi_{\gamma}$}(see Lemma $2$) \\ \cline{2-4}
Graphs & Power-law & PLN & Real Data \\ \hline
 
Monterey & 8906 & 582 &246\\ 
Santa Barbara & 160362 & 845 & 345\\ 
Egypt  & 29987 & 775 & 206\\
Los Angeles  & 170457 & 1101 & 363\\ 
New York  & 152937  & 1099 & 395\\
Manhattan R.W. & 357804 & 797& 583 \\
London & 176742 & 939 & 364\\ 
%France &10211 & 313 & 106\\
%Australia & 22367 & 412&155 \\
Orkut &41377&369&155\\
\end{tabular}
\caption{\small Comparing the min degree of
  the top $10\%$ high-degree nodes on our datasets, against the values
  predicted by the theoretical bounds.}
\label{table:stats3}
%\vspace{-0.15in}
\end{small}
\end{center}
\end{table}

\para{Experimental Validation on Real Data.} Now we aim to
validate the theoretical result shown in the previous section using
experimental analysis on our datasets.  For each dataset
presented in Table~\ref{table:stats}, we compute its $\xi_{\gamma}$ value
({\em i.e.} the estimated minimum node degree within the $10$\% of the
highest degree nodes), and compare them against
their analytical bounds from the PLN (see Lemma~\ref{lem:xi}) and Power-law
models.

We list the results in Table~\ref{table:stats3}. Clearly, our theoretical approximation of
$\xi_{\gamma}$ values using PLN are consistently more accurate than those from the
Power-law model, which over-estimates the real data up to $3$ orders of
magnitude.  For instance, on the New York graph, the $\xi_{\gamma}$ value from
the real sample is $395$; our theoretical approximation from PLN is $1099$;
but the Power-law predicts $152937$.

\subsection{Cardinality of High-degree Nodes}\label{cardinality}
We next use the distribution models to predict the cardinality of
high-degree nodes in OSN graphs.  The cardinality of high-degree nodes
is commonly used to design social algorithms and protocols, and to 
evaluate their performance and complexity. 
In the following, we first analytically quantify this metric using both the Power-law and PLN
models, and then empirically validate our predictions using
real datasets.

\para{Power-law model.}
Using the Power-law density function, we can derive the number of high degree
nodes by computing the integral of the upper tail of the distribution. Let
$\xi$ be the minimum node degree among the high degree nodes, then the number
of high degree nodes is approximated as:

\begin{equation}\label{powerlaw_Snodi}
N\int_{\xi}^\infty c x^{-\alpha}dx =c N\frac{1}{\alpha-1}{\xi}^{-\alpha+1}\approx \frac{N}{{\xi}^{\alpha-1}}
\end{equation}
where $c$ is the normalization constant, approximated by 
$(\alpha -1)d_0^{\alpha -1}$, and $d_0$ is the minimum node degree in the network.

\para{PLN model.}
As mentioned before, a high degree node has a degree at least $\xi$, where
$\xi$ can be computed as in Section~\ref{quantile}. Thus, the next lemma
follows:

\begin{lemma}\label{lem:high_degree}
  Let $\xi$ be the minimum degree for a high degree node, then the number of high
  degree nodes $\approx \frac{N}{2 \pi}\frac{e^{-(\frac{\log{\xi}-\mu}{\sqrt{2\tau^2}})^2}}{\frac{\log{\xi}-\mu}{\sqrt{2\tau^2}}}$.
\end{lemma}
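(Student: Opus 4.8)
The plan is to mirror the Power-law derivation in~(\ref{powerlaw_Snodi}): the expected number of nodes whose degree exceeds $\xi$ is $N$ times the tail probability, i.e. $N\,(1 - F(\xi))$. Rather than integrating a density, I would work directly with the CDF $F$ from~(\ref{cdf_pareto_log}), whose large-$x$ behavior is already characterized. First I would apply Lemma~\ref{lem:upper} to replace the full tail $1 - (\Phi(z) + x^\beta A\,\Phi^c(z+\beta\tau))$ by its limiting Lognormal form $\Phi^c(z)$, evaluated at $x = \xi$ with $z = \frac{\log\xi - \mu}{\tau}$. This reduces the claim to estimating $N\,\Phi^c\!\left(\frac{\log\xi - \mu}{\tau}\right)$, which is the natural PLN analogue of the upper-tail integral used for the Power-law.

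Next I would expand $\Phi^c$ asymptotically. Using the error-function identity~(\ref{approx_phic}), $\Phi^c(z) = \tfrac12\,\mathrm{erfc}\!\left(\frac{z}{\sqrt2}\right)$, and then substituting the one-term expansion~(\ref{approx_erfc_simple}), $\mathrm{erfc}(u) \approx \frac{e^{-u^2}}{u\sqrt\pi}$, with $u = z/\sqrt2$. After substituting $z = \frac{\log\xi-\mu}{\tau}$ and simplifying, the squared exponent collapses to $\left(\frac{\log\xi-\mu}{\sqrt{2\tau^2}}\right)^2$ and the denominator to $\frac{\log\xi-\mu}{\sqrt{2\tau^2}}$, so that $N\,\Phi^c(z)$ takes exactly the CCDF form already appearing inside the proof of Lemma~\ref{lem:xi} (cf.~(\ref{eq:ccc})), now scaled by $N$. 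This gives the stated closed form for the cardinality of high-degree nodes.

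The routine but error-prone part — and the step I would watch most carefully — is the bookkeeping of the constant factors of $\sqrt2$ and $\sqrt\pi$ introduced by the change of argument $u = z/\sqrt2$ inside $\mathrm{erfc}$ and by the rescaling of the denominator from $z$ to $\frac{\log\xi-\mu}{\sqrt{2\tau^2}}$. Getting these right is precisely what fixes the leading coefficient in front of $\frac{e^{-(\cdot)^2}}{(\cdot)}$, and consistency demands it agree with the normalization used in~(\ref{eq:ccc}) for Lemma~\ref{lem:xi}. The only approximation-quality concern is truncating the series~(\ref{approx_erfc}) to a single term; I would justify this exactly as the surrounding text does, noting that in the ``large $\xi$'' regime governing high-degree nodes the neglected terms are of lower order, with the negligible loss confirmed empirically rather than by a formal error bound.
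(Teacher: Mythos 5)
Your proposal is correct and follows essentially the same route as the paper's own argument: invoke Lemma~\ref{lem:upper} to reduce the tail count to $N\,\Phi^c\!\left(\frac{\log\xi-\mu}{\tau}\right)$, then apply the one-term $\mathrm{erfc}$ expansion~(\ref{approx_erfc_simple}) to obtain the closed form. One remark worth making explicit: carrying out the constant bookkeeping you describe gives the prefactor $\frac{N}{2\sqrt{\pi}}$, in agreement with the normalization of~(\ref{eq:ccc}) in the proof of Lemma~\ref{lem:xi}, whereas the lemma as stated writes $\frac{N}{2\pi}$; the stated constant appears to be a typo, and your insistence on consistency with~(\ref{eq:ccc}) is precisely the check that catches it.
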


%\begin{proof}
Using Lemma~\ref{lem:upper}, the number of high degree nodes described
by the PLN distribution can be studied from the $\Phi^c(x)$ component of the
CCDF of the PLN, where $x$ is $\frac{\log{\xi}-\mu}{\tau}$. Thus this number
becomes: $N\Phi^c \left(\frac{\log{\xi}-\mu}{\tau}\right)$. Using the
approximation in (\ref{approx_erfc}), we approximate the number of
nodes in the network with degree no less than $\xi$ by
$\frac{N}{2 \pi}\frac{e^{-(\frac{\log{\xi}-\mu}{\sqrt{2\tau^2}})^2}}{\frac{\log{\xi}-\mu}{\sqrt{2\tau^2}}}~.$

\begin{table}[t]
\begin{center}
\begin{small}
\begin{tabular}{c|c|c|c}
\hline
%  && \multicolumn{2}{|c|}{Scalar  values} \\ \cline{3-4}
Real  &\multicolumn{3}{|c}{  \# of nodes with degree $\geq \xi_{\gamma}$} \\ \cline{2-4}
Graphs & Power-law & PLN & Real Data \\ \hline
Monterey & 1843 & 1387 &1385\\ 
Santa Barbara & 6763 & 2442 & 2724\\ 
Egypt & 61170 & 30387 & 28319 \\
Los Angeles  &141980& 54894  & 57182 \\ 
New York  &  204540 &77879  &85581\\
Manhattan R.W. & 76133& 76133& 72854 \\
London & 398529 & 156850 &160050 \\ 
%France &154670& 76133& 72854\\
%Australia &267440 &109410 &121560\\
Orkut &761970&334900&307240\\
\end{tabular}
\caption{\small Number of nodes with degree higher than $\xi_{\gamma}$, with $\xi_{\gamma}$ estimated from the $10\%$ of the nodes,  
computed on the fitted models and the real data.}
\label{table:stats4}
%\vspace{-0.15in}
\end{small}
\end{center}
\end{table}

\begin{figure}[t]
\centering
\epsfig{figure=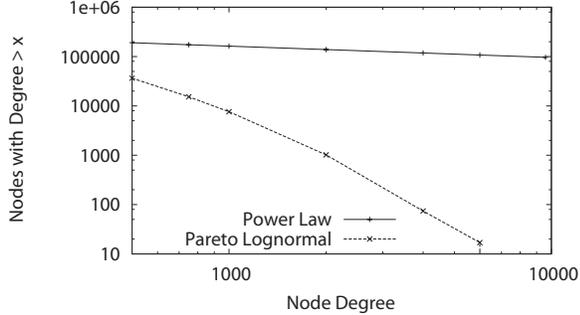,width=3in}
%\vspace{-0.15in}
\caption{\small When predicting \# of high degree nodes, results from the Power-law and
  PLN models diverge up to $5$ orders of magnitude.}
\label{fig:supernodes}
%\vspace{-0.1in}
\end{figure}

\begin{figure}[t]
\centering
\epsfig{figure=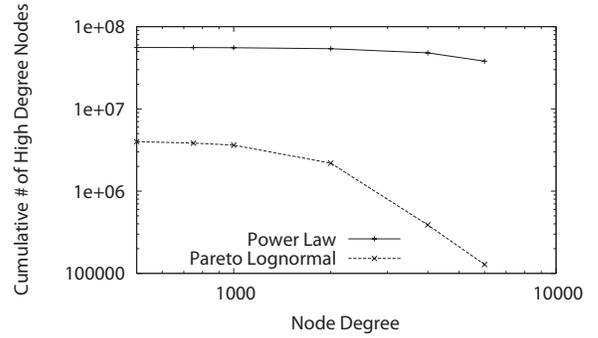, width=3in}
%\vspace{-0.15in}
\caption{\small Comparing the cumulative degree of high degree nodes between the
  Pareto-Lognormal model and the Power-law reveals different values up to $3$
  orders of magnitude.}
\label{fig:degree_supernodes}
%\vspace{-0.1in}
\end{figure}

\begin{figure*}[t]
\subfigure[Santa Barbara, CA]
{\epsfig{figure=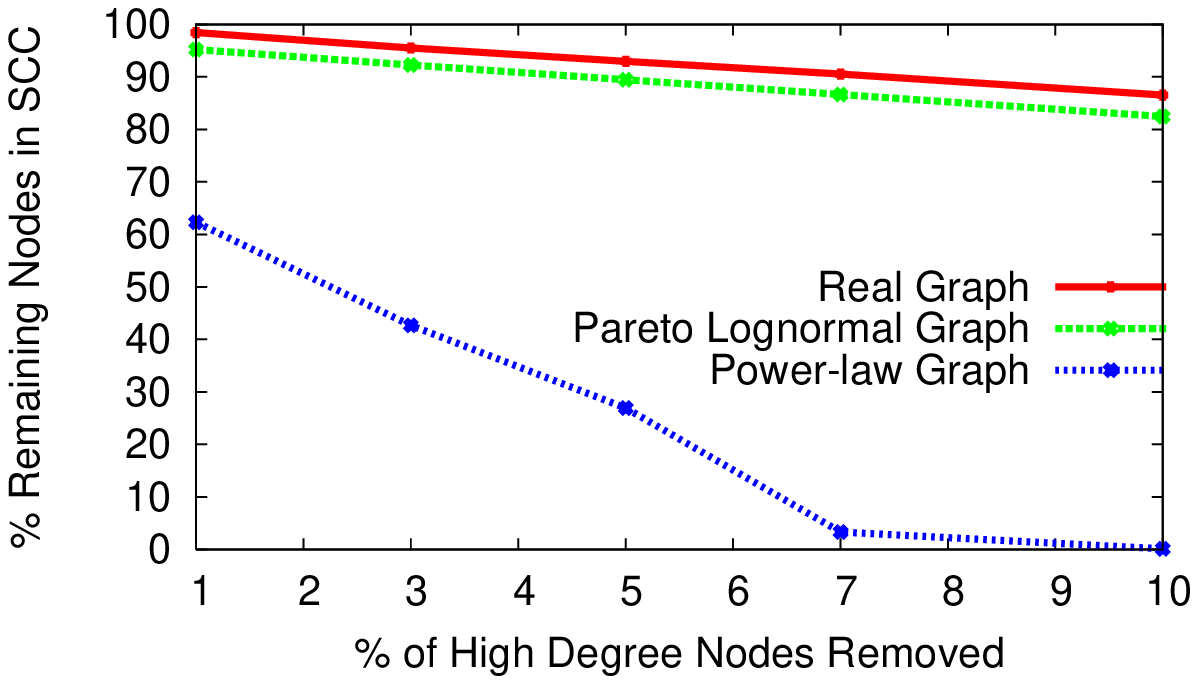, height=1.25in}}
\subfigure[Los Angeles, CA]
{\epsfig{figure=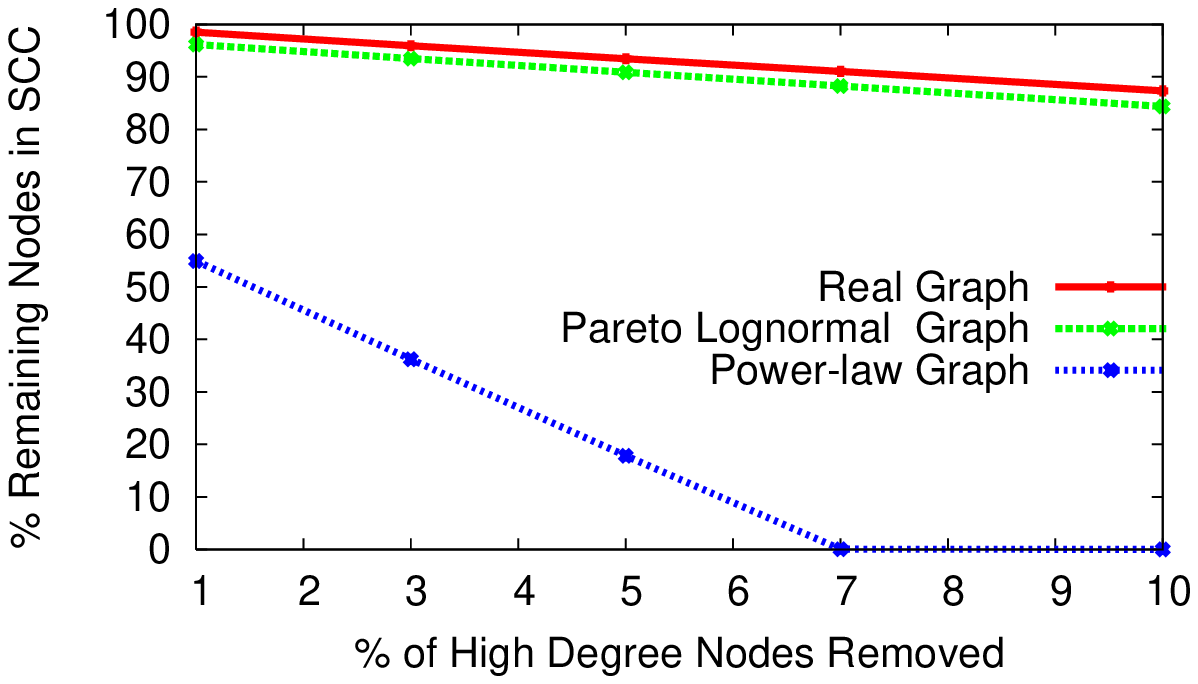,height=1.25in}}
\subfigure[London, UK]
{\epsfig{figure=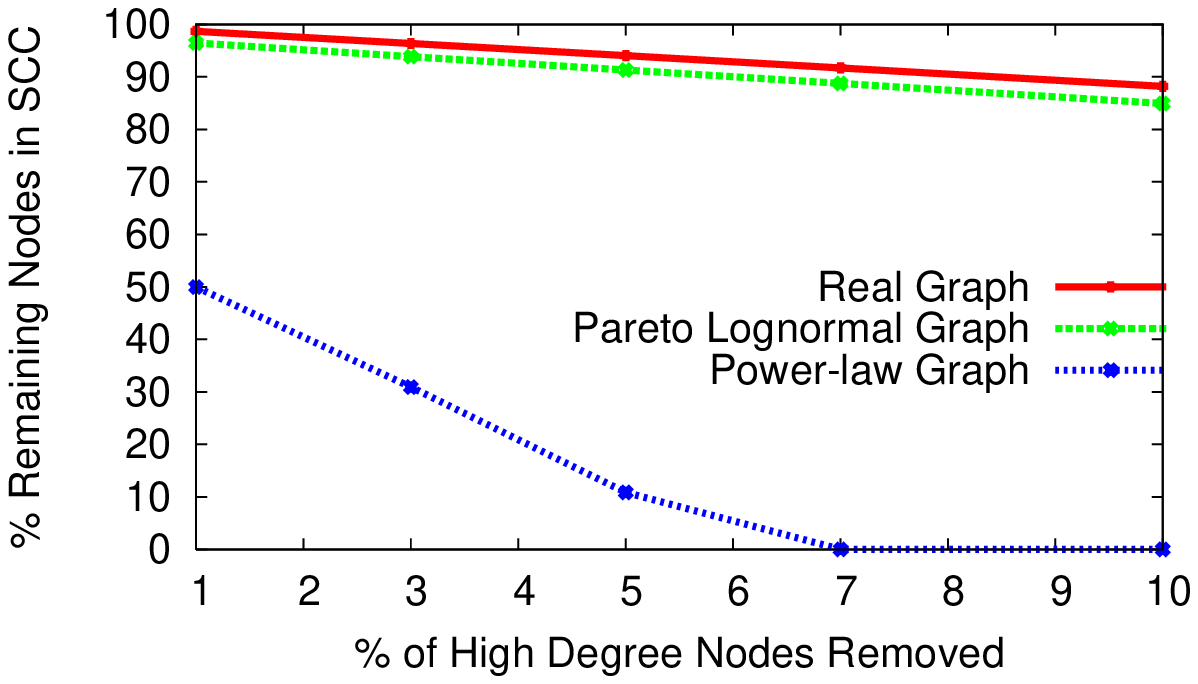,height=1.25in}}
%\vspace{-0.1in}
\caption{\small Number of users remaining connected after deleting $x\%$ of
  the highest degree nodes from the graph.  The reported results compare
  partitioning effects on the synthetic PLN and Power-law graphs and a real
  graph.  The original graphs are Facebook graphs representing users in Santa
  Barbara CA, Los Angeles CA and London UK.}
\label{fig:cut_heuristic}
%\vspace{-0.1in}
\end{figure*}

\para{Comparing High-degree Node Cardinality and Connectivity.}

We use two experiments to validate the quality of predictions on high-degree
nodes from PLN and Power-law.  First, we study the number of high-degree nodes
in the {\em measured} network datasets, and compare this number with that
estimated from the fitted models in Table~\ref{table:stats4}.
Specifically, we study the total number of high-degree nodes, defined as the nodes with
degree higher than $\xi_{\gamma}$ generated from the top $10\%$ of the
network as proven in Section~\ref{quantile}.

Table~\ref{table:stats4} shows the cardinality of high-degree nodes in the
measured networks, as well as  those predicted by the PLN and
Power-law models. Clearly, the PLN model provides a tight approximation of the real
dataset values, while  predictions made by the Power-law model
generally lead to more than $100$\% estimation error.

Empirical results like these can sometimes be biased because of specific
distributions in the real data.  To eliminate any possible bias, we generate
{\em pure sample} networks from the Pareto-Lognormal and the Power-law
models, then compute and compare each model's predicted number of high degree
nodes and their connectivity.

The {\em pure sample} networks are generated as follows.
For the Power-law, we take the parameters of a fitted model on a particular
real sample, and generate a pure sample by uniformly extracting $N$ numbers
between $0$ and $1$. Then for each extraction $x\in N$, a node with the
following degree is generated: $degree_x=\frac{1}{u^{(1/\alpha)}}$, where
$\alpha$ has the value of the Power-law exponent fitted on a real dataset.
For the Pareto-Lognormal, we generate a pure sample as follows: let $x$ be a
random number uniformly extracted between $1$ and $N$, and let $y$ be a
random number from the exponential distribution with mean parameter
$1/\beta$. Then $degree_{xy}=e^{\mu+\tau*x- y}$, where $\beta$ , $\mu$ and
$\tau$ are the parameters of the Pareto-Lognormal fitted on the real dataset.
In Figures~\ref{fig:supernodes} and~\ref{fig:degree_supernodes}, we compare
the number of high degree nodes and the sum of all node degrees respectively.
We do this for $N= 80K$, and use the average of the fitted model values as
the parameters of the Power-law and the Pareto Lognormal.
Figure~\ref{fig:supernodes} shows the divergence of these two models in
estimating the high degree nodes. In particular for nodes with degree
$>2000$, Power-law generates $2$ orders of magnitude more nodes than PLN.

\section{Impact on Social Applications}
\label{sec:application}

In the last section, we argued using both analytical predictions and
empirical data that the Pareto-Lognormal model provides a much more accurate
representation of node degree distributions in OSNs.  In this section, we seek
to better understand how the choice of degree distribution model impacts the
performance of applications on social graphs.  More precisely, for
applications that run on social graphs, we wish to quantify just how much
the choice of a degree distribution model alters their experimental
results.

When actual measured social graphs are not available, social applications
generally use synthetic graphs generated using statistics extracted from real
graphs~\cite{influence_kdd09}.  To highlight the differences when
applications use either the Pareto Lognormal or the Power-law degree model,
we take each of our real Facebook social graphs, and generate two synthetic
versions of them: one assuming a PLN model for degree
distribution, and one assuming Power-law.  We generate these
synthetic graphs using prior works from~\cite{Newman01arbitraryDegree, gen_degree_distr_graphs},
which can generate a synthetic graph with no self loops, given a specified
degree distribution and a given network size.

\begin{figure*}[t]
\subfigure[Independent Cascade]
{\epsfig{figure=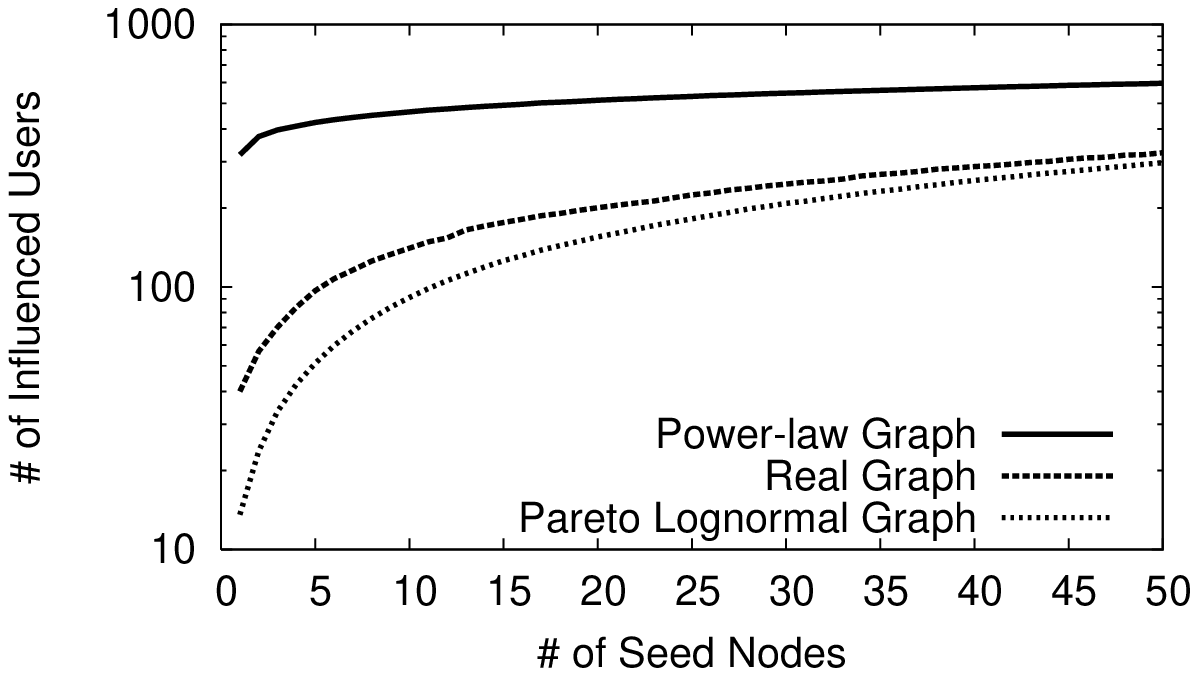, height=1.25in}}
\subfigure[Weighted Cascade]
{\epsfig{figure=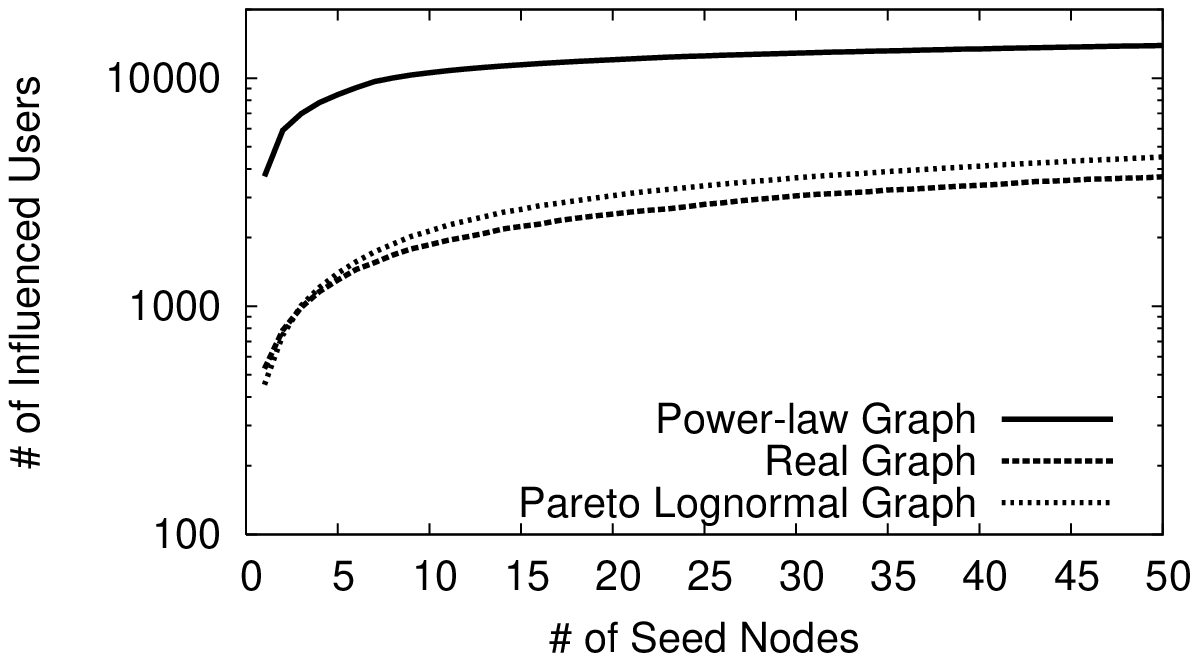,height=1.25in}}
\subfigure[Linear Threshold]
{\epsfig{figure=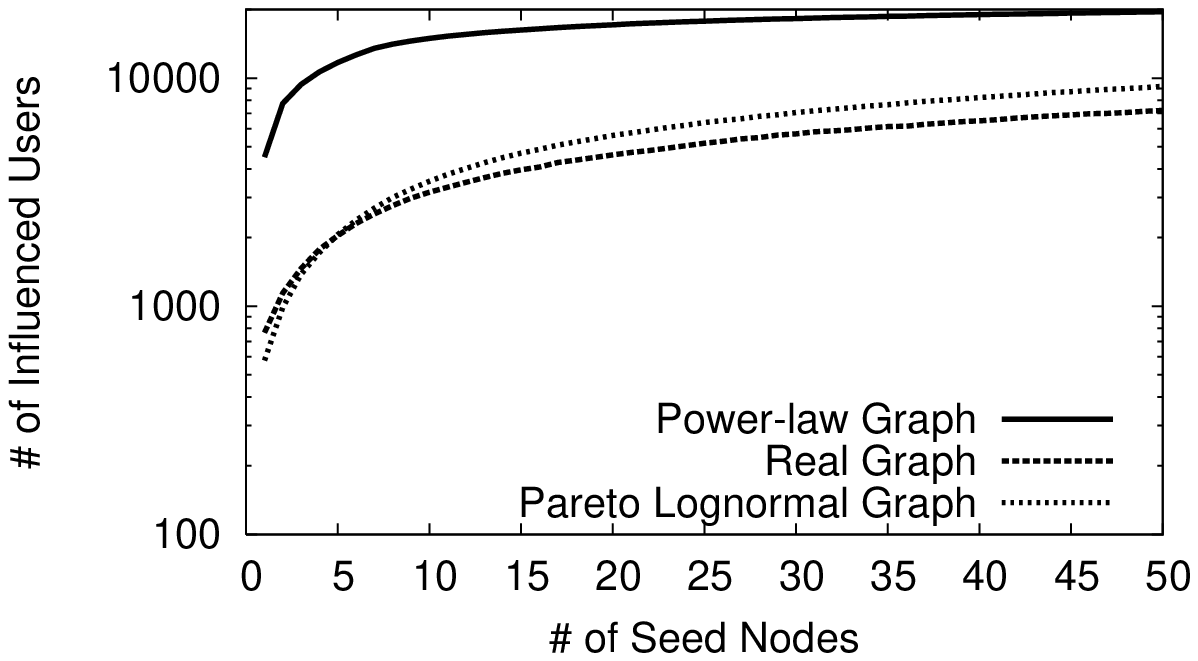,height=1.25in}}
%\vspace{-0.1in}
\caption{\small Number of users influenced by a given number of seed nodes under
  three different influential dissemination models. Seed nodes are sorted by
  degree in decreasing order.}
\label{fig:maxinfluence}
\end{figure*}

We present results from implementations of three important applications on
social graphs.  They include: a) a graph partition approach that replicates
high degree nodes across partitions (introduced in Section~\ref{sec:intro}),
b) influence maximization on social graphs using three different
information spread models~\cite{influence_kdd09}, and c) a link-privacy attack
on anonymized social graphs~\cite{link_privacy}.  

All results are computed on a local cluster of Dell Xeon servers with 24--32GBs of
main memory.  Memory constraints limited the size of graphs we used in our
results.  Using Amazon EC2 large memory machines, we verified that our
observed trends hold for larger graphs such as the Orkut graph, but such
computations were too slow and costly in resources for us to generate
comprehensive graphs.

\subsection{Partitioning via Supernode Replication}
Efficiently answering queries on large graphs is a difficult challenge faced
by companies dealing with large network datasets, {\em e.g.} Facebook, Zynga.
Scalable solutions require splitting the graph data across machines in
computing clusters.  While some systems~\cite{little_engine,mapreduce}
distribute nodes randomly across a cluster, an ideal solution would find a
way to distribute the graph across the cluster as subgraphs with minimal
edges between them.  The best solutions would minimize edges between
partitions, which minimizes data dependencies between machines and maximizes
parallel query processing.

Unfortunately, graph partition is known to be
NP-Complete~\cite{graph_partition}, and social graphs are known to be very
dense graphs that do not partition well.  
Fortunately, Power-law networks are known to be vulnerable to ``targeted
attacks,'' {\em i.e.} they quickly fragment into disconnected subgraphs when
nodes with the highest degree are removed~\cite{powerlawfailure}.  Since
social graphs are widely accepted as Power-law graphs, we hypothesize we can
easily partition a social graph by first fragmenting it into subgraphs,
through the removal of a small number of supernodes.  These supernodes (and
their edges) can be replicated to every node in a cluster.  In addition to
potentially partitioning highly connected graphs, this approach is
attractive because it does not require the entire graph to
be in memory, and can thus efficiently run on extremely large social graphs.

We now look at the impact of running this partition scheme both on real
social graphs, and on synthetic graphs generated using the Power-law and PLN
synthetic models.  We take three of our social graphs, the Facebook regional
graphs from Santa Barbara, CA, USA and London, UK~\cite{interaction}, and the
Orkut social graph~\cite{socialnets-measurement}, and create synthetic graphs
that match each of them in size and node degrees, based on either the
Power-law or the PLN model for degree distributions.  We generate synthetic
graphs from degree distributions using Newman's
approach~\cite{Newman01arbitraryDegree}.

For each of these graphs, we test the effectiveness of our graph partitioning
strategy, by incrementally removing nodes from the graph, starting with those
nodes with the highest degree.  After these super nodes are removed, we
examine the remaining graph, and measure the size of the largest connected
component as a function of the original full graph.  

We plot the results in Figure~\ref{fig:cut_heuristic} for each of the
original graphs.  The results are strongly consistent among our datasets.
Regardless of the original graph in question, synthetic graphs produced using
a Power-law distribution are quickly fragmented. Removing just the top 3\% of
the highest degree nodes reduces the largest connected component to only
$40\%$ of the original graph.  In contrast, the real social graphs and
synthetic graphs from the Pareto-Lognormal distribution are highly resistant
to the fragmentation.  Even after removing $10\%$ of the highest degree
nodes, the graph remains largely connected, and the biggest connected
component still contains $80\%$ of all nodes. In this case, relying on an
inaccurate degree distribution model produces results fundamentally different
from the original graph.

\subsection{Influence Maximization}
Social networks have proven to be exceptionally useful tools for information
dissemination and marketing, and are used by companies and individuals to
promote their ideas, opinions and products.  One critical problem of interest
is that of influence maximization, or how to identify an initial set of users
who can influence the most number of users.  This is known as the {\em
  influence maximization problem.}

We examine the impact of degree distribution models on the influence
maximization problem.  Prior works have provided algorithms that use
statistical methods to model information dissemination over social
links~\cite{influence_kdd09, max_influence}.  We consider three different
models to spread information: {\em independent cascade model}, {\em weighted
  cascade model} and {\em linear threshold}.  These models differ in how they
compute the probability of influencing a node in the graph. For example, {\em independent
cascade} assumes a node is influenced independently from nearby nodes, while
the probability of a node in {\em weighted cascade} being influenced is a
function of its degree. In each case, heuristics provided by
\cite{influence_kdd09} allow us to compute the number of users influenced by
an initial set of ``seed'' users. 
We seek to understand how the spread of influence in these models is impacted
by the structure of the network.  We take the Santa Barbara Facebook graph,
its two synthetic variants based on Power-law and PLN, and compute how many
users are influenced as the number of seed nodes increases for each of the
three influence models mentioned above.  We focus on results from the Santa
Barbara network, since it is the largest graph we can execute given the
significant memory footprint of code from~\cite{influence_kdd09}.  Prior work
has shown this graph to be a representative social graph in all graph
metrics~\cite{modeling_www}.

Figure~\ref{fig:maxinfluence} shows the spread of information for each of the
three different models on the Santa Barbara network. For each model, we
compare the results on Santa Barbara with results on its Power-law
and PLN synthetic graphs.  
Our results are consistent across all three influential models. While the
exact result differs for each particular influential model, results from the
PLN graph results always closely follow results from the real graph. In
contrast, results on the Power-law graphs always overestimate the number of
influenced users on the real graph, sometimes overestimating by an order of
magnitude.  These results are further confirmation that using an inaccurate
degree distribution model can introduce dramatic errors in application-level
experimental results.

\begin{figure}[t]
\centering
%\begin{minipage}[h]{0.95\columnwidth}
\epsfig{figure=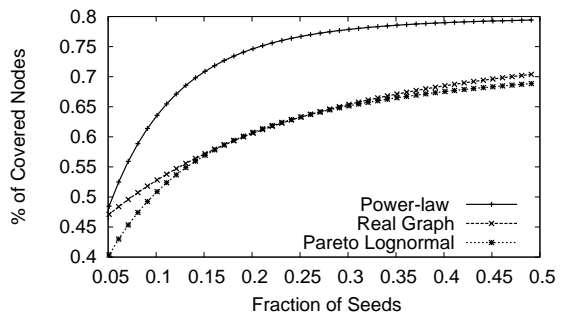, width=3in}
%\vspace{-0.1in}
\caption{\small The capability to spread an attack through the network is
  highly dependent on assumptions of the underlying topology. The attack is
  much more effective on a Power-law network than both the real graph and a
  Pareto-Lognormal network.}
\label{fig:link_p}
%\vspace{-0.1in}
\end{figure}

\subsection{Attacks on Link Privacy}
For our third application study, we focus on the problem of link privacy in
social graphs. Prior studies such as \cite{link_privacy} used experiments to
quantify the impact of attacks to disclose the presence of
connections between social network users.  This example differs from our
other application studies in that the application results are simulated based
on analytical derivations that integrate a degree distribution model.

\cite{link_privacy} presents different approaches to disclosure network
information by attacking particular nodes in the network.  The most effective
attack strategy is ``Highest,'' which tries to compromise nodes by spreading
the attack from high degree nodes.  Intuitively, because high degree nodes
have many incident edges, they are more likely to successfully spread the
attack across other nodes.  The authors provide a mathematical form to
quantify the effect of this attack by using a probabilistic approach that
leverages the probability distribution function of the network\footnote{ This
  use of network topology models is also commonly used in problems involving
  information propagation or dissemination in social networks.}.  This leads
to a simple question: how much does the effectiveness of this attack depend on
the choice of network topology model?

Quantifying the strength of this attack relies a notion of node
coverage.  A node is covered if and only if its $1$-hop neighbors are
unequivocally known. This definition allows \cite{link_privacy} to present
results where the strength of this attack is quantified as the fraction $f$
of nodes whose entire immediate neighborhood is known.

Let us begin by introducing the required variables to repeat the experiment:
$N$, $m$ and $d_0$ are respectively the total number of nodes in the
networks, the sum of the node degrees and the minimum degree.  The variable
$D=\sum_{i=1}^k d(x_i)$ identifies the sum of the degree of the $k$ nodes
from where the attack originates.  Theorem~$3$ in~\cite{link_privacy}
states that if $D= - \frac{\ln{\epsilon_0}}{d_0}2m$, then the disclosed nodes
after the attack are: $N(1-\epsilon-o(1))$.  $\epsilon$ is a variable
that we can compute as follows: let $\epsilon_0
=e^{(-\frac{d_0}{2m}\sum_{i=1}^k d(x_i) )}$, which is needed to estimate the
fraction of covered nodes.  We can derive $\epsilon$ as:
$$\epsilon = \sum_{x=d_0}^{k_{max}} e^{ -(\frac{x}{2m}\sum_{j=1}^k d(x_j)) } f(x)$$
where $k_{max}$ is the maximum node degree, $d(x)$ is the degree of $x$ and
$f(x)$ is the density function of the degree distribution. If we assume a
Power-law degree distribution, we substitute $f(x)$ with $cx^{-\alpha}$ and
in the case of the Pareto-Lognormal we use $f(x)$ as defined in (\ref{pdf_pareto_log}).

Figure~\ref{fig:link_p} shows that results on a Power-law network
significantly overpredicts the impact of attacks using the ``Highest''
strategy.  In contrast, results on the real social graph closely match those
from a synthetic graph following the Pareto-Lognormal degree model.  Clearly,
applications that use the Power-law model in their analytic derivations and
simulations are also significantly affected by their choice of node degree
distribution models.

\begin{figure}[t]
\centering
\epsfig{figure=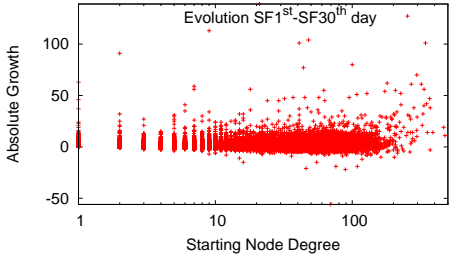, width=2.5in}
%\vspace{-0.1in}
\caption{\small The evolution process affects nodes independently of their node
  degree.}
\label{fig:evol_ratio}
%\vspace{-0.1in}
\end{figure}

\section{Towards a Generative Model}
\label{sec:generative}

In this section, we present preliminary results towards a generative model able to 
reproduce a Pareto Lognormal graph.
We also show using evidence in our datasets that the formation of these graphs 
diverges from the hypothesized {\em preferential attachment}-based scheme
presented in prior studies~\cite{Vazquez2003, Leskovec2007}.

Our goal is to provide an intuition of an algorithm that
captures a {\em lognormal multiplicative process}, and accurately models the
temporal evolution of online social networks.  Unlike generative models that
focus on reproducing a single snapshot of a network~\cite{Leskovec2007}, we
focus on modeling the evolutionary process that captures the network growth.
While the Power-law curve is generated by an iterative process following the
preferential attachment rule, our PLN model may be explained by the following
process: {\em ``A node joins the network through an introduction node, and
  builds connections within the local community; then it completes its growth
  by joining and growing in multiple other communities.''}  To realize a
generative model based on this intuition, we need a stochastic process derived from the PLN
distribution that balances the Pareto and the Lognormal process to drive the
growth rate of each node.
 
\para{Dynamic Graph Snapshots.}
To understand the growth of social networks, we perform $30$ daily crawls of
the San Francisco Facebook network during the month of November $2009$.  We
use the same crawling strategy used to capture other Facebook
datasets~\cite{interaction}.  We use these daily snapshot graphs to help 
us understand the rate of network growth and how new edges form.

\para{Network Generative Models.}  Since our PLN model is itself a
combination of a Pareto distribution and a Log-normal distribution, our
search for an iterative process should naturally integrate two
theories. Starting with ``Preferential Attachment''~\cite{Vazquez2003}, which
produces a Power-law distribution, we add the ``Law of Proportional
Effects''~\cite{Gibrat1932}, where the growth of the degree of a node, at
discrete time $t+1$, is multiplicative and independent from its actual size
$X_t$. This contributes to a Lognormal distribution.  We study the
correlation of growth and current degree using our 30 snapshot graphs of the
San Francisco network.  The results in Figure~\ref{fig:evol_ratio} confirm
that the degree growth of different nodes is not influenced by their starting
degree, meaning that the growth is a constant independent of
node degrees.

\para{A Two-Phase Iterative Algorithm.}
We envision a two-phase iterative algorithm that integrates fundamental
properties from the ``Law of Proportional Effects''~\cite{Gibrat1932} and
``Preferential Attachment''~\cite{Vazquez2003}.  The bimodal connotation
is meant to highlight the ability of our algorithm to integrate
the Power-law and Lognormal probability models.

Our two-phase algorithm alternates between adding new nodes to the network
using a preferential attachment model, and growing the connectivity among
nodes using the law of proportionate effects.  Its two phases are driven by a
probability parameter $p$.  We do not claim to provide a formal derivation of
the algorithm or its proof of correctness.  Both are beyond the scope of this
paper.  Here we limit ourselves to preliminary results to validate our
underlying intuition.

\begin{figure}[t]
%\begin{figure}
\centering
\epsfig{figure=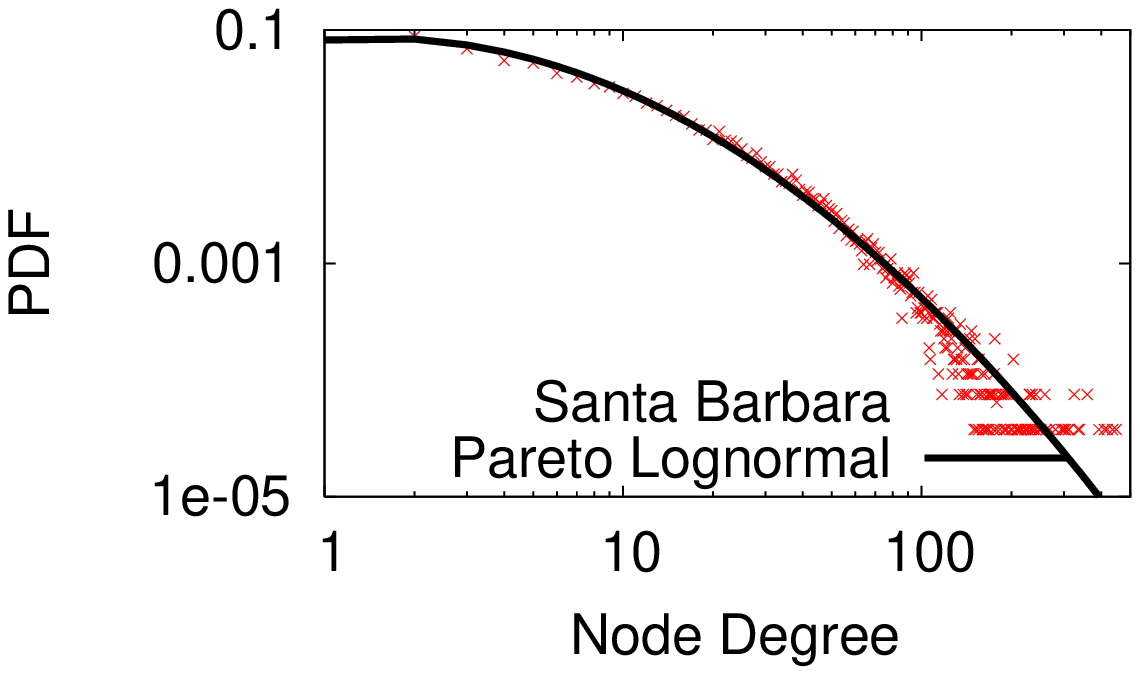, width=2.45in}
\epsfig{figure=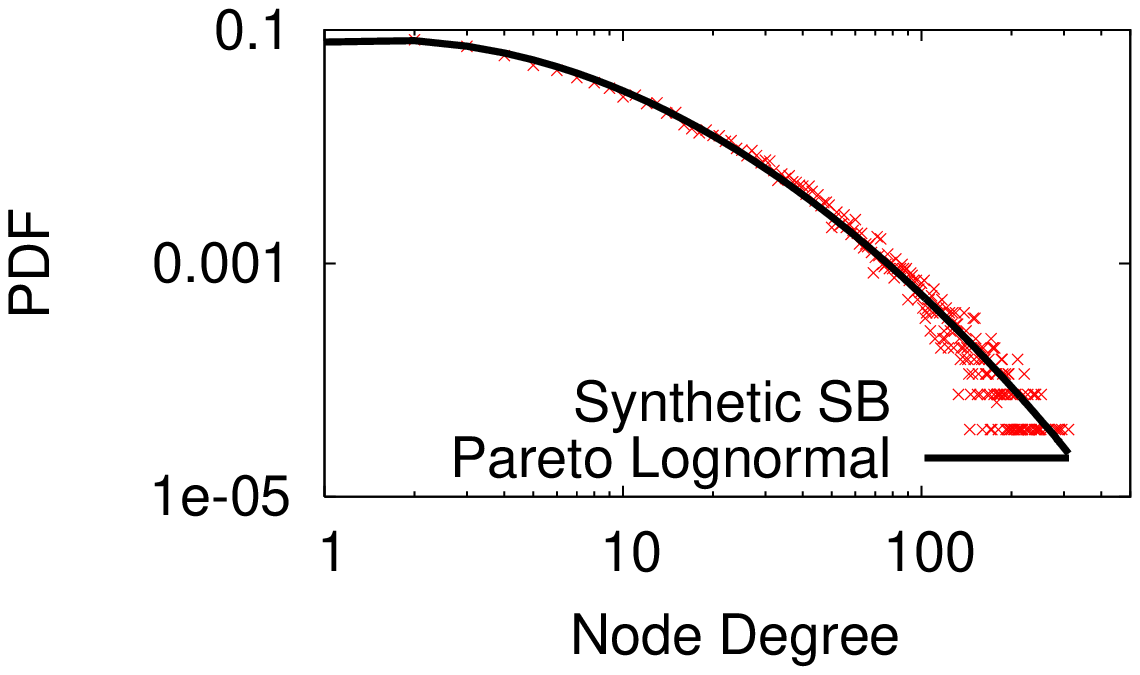, width=2.45in}
\caption{\small The comparison between the PDFs of the real and synthetic
  Santa Barbara network shows that our generative model provides an accurate
  reproduction of the original network.}
\label{fig:sythentic_real}
\end{figure}

%\para{Fitting Parameters To Target Graphs.} 
\para{Initial Validation.} 
To evaluate our intuition about the generative method, we implement a simple
instance of a ``Two-Phased Iterative Algorithm'' summarized above, and run a
structural comparison between the generated graphs and the real social
network graphs.  We fit the parameters of this algorithm by following the
approach presented in~\cite{modeling_www}.

Our goal is to show that our intuitive algorithm for generating PLN
distributions can generate synthetic graphs that accurately reproduce the
degree distribution of our real data.  In Figure~\ref{fig:sythentic_real}, we
show the similarity between the pdf of node degrees from the real data,
versus those from the synthetic graph produced by our algorithm.  The results
are consistent across all seven of our datasets, and we only show results
only on Santa Barbara Network for brevity\footnote{These synthetic graphs
  also show similar structural characteristics to the original graphs,
  including clustering coefficient, Knn, separation metrics, and betweenness
  centrality.}.  Given these initial results, we are continuing to work on
formalizing the mathematical formulation of our generative model and a more
thorough analysis via larger, more complete synthetic datasets.

\section{Related Work}
\label{sec:related}
\para{Social Networks.} 
Historically, both online and offline social 
networks have been explained through the seminal Power-law model. 
Power-law is often described with the ``rich gets richer'' paradigm which
has been proven to hold in real datasets across multiple disciplines,
including Internet router topology graphs,
biological graphs~\cite{powerlawfit,powerlaw_bio}, human mobility
traces~\cite{human_mobility}, etc.   

One of the first OSN study was conducted on Club Nexus
website~\cite{Adamic2003}.  Later analytical studies attracted attention for their
large scale, including CyWorld, MySpace and Orkut~\cite{Ahn07},
YouTube, Flickr, LiveJournal in~\cite{socialnets-measurement}, and the most
recent studies of Facebook~\cite{interaction} and Twitter\cite{sue_twitter}.
More recently, researchers have begun to investigate the temporal properties
of OSNs~\cite{Kumar2006,flickr-growth,Leskovec2005}.

Preliminary analysis of OSN structures in these and other studies has shown
that the degree distribution does not follow a pure Power Law distribution. 
As a result, followup work proposed to segment these distributions and fit
the segmented pieces with distinct Power-law
settings~\cite{sampling_facebook,Ahn07}.

\para{Social Applications and Systems.}
We have shown that our proposed PLN is statistically more accurate in
describing OSNs than the seminal Power-law model.  We believe that many
social applications and protocols designed based on the Power-law assumption
need to be re-evaluated, especially algorithms and protocols that rely on the
population of high degree nodes or their connectivity. Examples include
distributed resource replication strategies to minimize routing delay and
social search, epidemics dissemination strategies to maximize information
spread~\cite{influence_kdd09}, landmark selection strategies to accurately
predicts shortest paths in graphs~\cite{social_landmark}, community detection
to improve social recommendation systems, and social attack
strategies~\cite{link_privacy}. 

\section{Conclusion}
\label{sec:conclusion}

Degree distributions are incredibly important tools for studying and
understanding the structure and formation of social networks. They give us 
insight into network structure, and are the foundations of generative models
that model growth and network
dynamics~\cite{pareto_cutoff,Leskovec2008,Leskovec2007}.  Finally,
they are key tools in the design and analysis of efficient algorithms for a
number of challenging graph problems.

Our work sheds light on an existing discrepancy between the commonly used
Power-law model and real measurement data from today's online social
networks.  While most prior studies have ignored this error, we show that it
is consistent across different communities and propose the Pareto-Lognormal
distribution (PLN) as a more accurate alternative.  Our analysis shows that
PLN significantly outperforms existing elementary models, and is the most
accurate and efficient of all complex distribution models we studied.
Finally, we analytically quantify the magnitude of error reduction achieved
by moving from the Power-law to the PLN model, and confirm our analysis 
with empirical data. 

\begin{small}

\end{small}

\end{document}